%% file: DiDwContinuousMediators.tex
\documentclass[12pt,reqno]{amsart}
\usepackage{graphics,amsmath,amssymb,epsfig,stmaryrd,color,amsaddr}
\usepackage{subfig,amsfonts,geometry,array}
\usepackage{graphicx}
\graphicspath{ {./images/} }
\usepackage{amsthm}
\allowdisplaybreaks
\usepackage{mathabx} 
\usepackage{flafter}
\usepackage[svgnames]{xcolor}
\usepackage{bbm}
\usepackage{booktabs}
\usepackage{hyperref}
\definecolor{darkblue}{rgb}{0.0,0.0,0.3}
\hypersetup{colorlinks,breaklinks,linkcolor=darkblue,urlcolor=darkblue,anchorcolor=darkblue,citecolor=darkblue}
\newcommand{\indep}{\rotatebox[origin=c]{90}{$\models$}}
\usepackage{natbib}
\usepackage[normalem]{ulem}
\usepackage{verbatim}
\usepackage{multirow}
\usepackage{bm}
\usepackage{enumerate}
\usepackage{dsfont}
\usepackage{pdflscape}
\usepackage{booktabs,threeparttable}
\newcommand{\sym}[1]{\ifmmode^{#1}\else\(^{#1}\)\fi}

\makeatletter
\renewcommand\@endtheorem{\vvv@endmarker\endtrivlist\@endpefalse}
\newcommand\vvv@endmarker{  {\nobreak\hfil\penalty50
  \hskip2em\vadjust{}\nobreak\hfil\openbox
  \parfillskip=0pt \finalhyphendemerits=0 \par
  \penalty 10000 \parskip=0pt\noindent}\ignorespaces}
\makeatother

\theoremstyle{plain}

\newtheorem{claim}{Claim}
\newtheorem*{assumption*}{Assumption}
\newtheorem{assumption}{Assumption}

\newtheorem{definition}{Definition}

\newtheorem{lemma}{Lemma}

\newtheorem{proposition}{Proposition}
\newtheorem{remark}{Remark}

\numberwithin{equation}{section}

\usepackage{pgfplots} 
\pgfplotsset{compat=newest}
\usepgfplotslibrary{fillbetween} 
\usepgfplotslibrary{polar} \usepgflibrary{shapes.geometric}  
\usetikzlibrary{calc}  
\pgfplotsset{my style/.append style={axis x line=middle, axis y line= middle, xlabel={$\tau^{DiD}$}, ylabel={$ATT$}, 
		every axis x label/.style={
			at={(ticklabel* cs:1.00)},
			anchor=west,
		},
		every axis y label/.style={
			at={(ticklabel* cs:1.00)},
			anchor=south,
		},
		axis equal}}

\begin{document}
\begin{titlepage}
\nonstopmode

\title[DiD Models in the Presence of Time-Varying Mediators]{Difference-in-Differences Models in the Presence of Time-Varying Mediators}
\author{Kyunghoon Ban, Zhengrun Chen, and D\'esir\'e K\'edagni}
\address{Rochester Institute of Technology and UNC-Chapel Hill} 
\noindent \date{\scriptsize{The present version is as of \today. We thank Bocar Ba, Jane Fruehwirth, Martin Huber, Valentin Verdier, and participants at the UNC econometrics workshop, the 2025 SEA Meeting, the 2026 AfES conference, seminar participants at Rensselaer Polytechnic Institute for helpful comments.
All errors are ours.
Kyunghoon Ban (kban@saunders.rit.edu); 
Zhengrun Chen (zhengrun@unc.edu);
D\'esir\'e K\'edagni (dkedagni@unc.edu)}}

\begin{abstract}
{
We study difference-in-differences (DiD) designs in which a binary treatment changes an endogenous time-varying (continuous, discrete, or mixed) mediator that in turn affects an outcome. Under our model assumptions, we show that the usual DiD estimand mixes the average direct effect on the treated, the average indirect effect, and a trend bias term. A two-way fixed effects (TWFE) regression that controls for the mediator does not recover the average direct treatment effect on the treated. We show that a DiD estimand conditional on the observed mediator path identifies the conditional average direct effect for treated units at that path, and that averaging over the treated path distribution identifies the average direct effect even when unconditional parallel trends fails. A stable average mediator effect assumption helps recover the average mediator and indirect effects. The framework extends to multivariate mediators, nonlinear DiD, and multiple treatment periods settings. Existing doubly robust estimators can be used to conduct inference. Revisiting the effects of railroad access on agricultural land values, the specification yields a positive direct component not mediated by measured market access, while the corresponding indirect component is small and imprecise. A TWFE benchmark with the same sample and baseline geographic covariates gives a small, imprecise direct coefficient, whereas the original-control TWFE coefficient reverses sign.

}
\end{abstract}
\maketitle

{\footnotesize \textbf{Keywords}: Difference-in-differences, endogenous time-varying covariates, causal mediation.

\textbf{JEL subject classification}: C14, C31, C35, C36.}

\end{titlepage}

\section{Introduction}

{

The use of difference-in-differences (DiD) models in economics and social sciences has increased rapidly in recent years. The method relies on a key assumption called parallel trends. The presence of mediators could compromise the validity of this assumption. In this paper, we study identification in DiD models when mediators are present. 
We study this problem for a binary treatment and a mediator that may be continuous, discrete, or mixed, may vary over time, and may be related to unobserved determinants of the outcome.

More specifically, to provide some intuition behind our main research question of interest, consider the following model
\begin{equation*}
\left\{
\begin{array}{lcl}
Y_{it}&=&\theta_{it}D_{it}+\beta_{it}M_{it}+\lambda_i+\eta_t+U_{it},\\[0.2cm]
M_{it}&=&\psi_t(D_{it},\gamma_i,\tilde{V}_{it}),\\[0.2cm]
D_{it}&=&h_t(\lambda_i,\gamma_i,\varepsilon_{it}),
\end{array}
\right.
\end{equation*}
where the vector $(\{Y_{it},D_{it},M_{it}\}_{i\in\mathcal N, t\in \mathcal T})$ is observed, while everything else is unobserved/unknown. 
In the specification above, $Y_{it}$ is the outcome, $M_{it}$ is the mediator, and $D_{it}$ is the treatment.\footnote{Throughout the paper, we assume that no individual receives the treatment in the baseline period, i.e., $D_{i0}=0$ a.s. So, in the canonical DiD setting, our focus is on the treatment $D_{i1}=D_i$.}  The outcome equation contains a random direct effect of treatment $\theta_{it}$, a random effect of the mediator on the outcome $\beta_{it}$, individual fixed effects $\lambda_i$, time effects $\eta_t$, and idiosyncratic shocks $U_{it}$.  The mediator, which can be continuous, discrete or mixed, is specified as an unknown function $\psi_t$ of a vector of unobservables $\tilde{V}_{it}$ , its own individual fixed effect $\gamma_i$, where the function $\psi_t$ can change arbitrarily over time.  Treatment selection is allowed to depend on the fixed effects in both the outcome and mediator equations and on an additional unobservable random shock $\varepsilon_{it}$, through an unknown function $h_t$.  The outcome and mediator fixed effects may be arbitrarily dependent (e.g., a special case is common fixed effects, $\lambda_i=\gamma_i$).  The direct effect is heterogeneous and may depend on the mediator path. The mediator effect on the outcome may also vary across individuals and over time. In this paper, we are interested in measuring the direct effect of treatment $D_{it}$ on the outcome $Y_{it}$, that is, we want to learn about some features of $\theta_{it}$. 

There are many empirical examples where there is a mediator between the treatment and the outcome of interest. For example, an expansion of health insurance (Medicare or Medicaid) can affect health conditions both directly and indirectly through preventive care. Also, a school reform such as an introduction of a new curriculum can affect wages directly and indirectly through educational attainment, or a local regulation may affect productivity directly and indirectly through asset values, input prices, or labor-market conditions. In these examples, the mediator is not an ordinary pre-treatment covariate but a post-treatment object whose distribution may be changed by the policy and affected by unobserved determinants of the outcome of interest. Consequently, simply controlling for it in a two-way fixed effects regression will generally induce some bias in the attempt to isolate the direct effect.

To illustrate the main idea behind our identification approach, we start from the canonical two-period, two-group DiD design. We first highlight the fact that the standard parallel trends assumption for the untreated outcome may not hold under our proposed framework and the set of assumptions we consider. Even when the idiosyncratic shocks satisfy independence restrictions, treated and untreated units can have different untreated outcome trends because they have different mediator fixed effects and because the average return to the mediator may change over time. Second, we emphasize that a TWFE regression controlling for the mediator generally fails to recover the average direct treatment effect on the treated because residualizing the treatment with respect to the mediator mechanically changes the estimand. Hence, the resulting coefficient generally mixes the direct effect with terms that depend on mediator effect heterogeneity and treatment-induced mediator changes.

Afterwards, we carefully propose a different route. We consider the difference between outcome changes for treated and untreated units, conditional on the full observed mediator path $(M_{i0},M_{i1})$. Then, we show that, under a DiD analogue of the sequential ignorability assumption considered in \cite{imai2010identification}, this path-conditional DiD estimand identifies the conditional average direct treatment effect on the treated, denoted $CADTT$. Averaging this conditional estimand over the mediator path distribution of treated units identifies the average direct treatment effect on the treated, $ADTT$.  Importantly, this result does not require the standard unconditional parallel trends assumption and is not a standard conditional parallel trends argument with a post-treatment covariate.  

Furthermore, we clarify that the standard DiD estimand can be decomposed into average direct and indirect effects plus a bias term that vanishes when the standard parallel trends assumption holds, which can be implied when the average effect of the mediator on the outcome is stable across periods under our framework. Since the direct effect is identified by the path-conditional DiD approach, the average indirect effect can consequently be identified. Accordingly, the average mediator effect on the outcome of interest can be identified as well when the treatment effect on the mediator is identified from a DiD estimand for the mediator.

The mediator path argument also extends to multiple treatment periods \citep{callaway2021difference, dechaisemartin2020two}, staggered adoption as well as nonstaggered adoption. 
For estimation, the identified $ADTT$ can be written as the integral of a conditional DiD estimand, where the conditioning variables are the pre- and post-treatment mediator values, with respect to the observed mediator path distribution for the treated group.  We therefore use the doubly robust (DR) DiD estimator of \cite{sant2020doubly}, applied with the mediator path as the conditioning vector.

This paper is related to several branches of the literature. First, it contributes to the causal mediation analysis. Classical mediation analysis studies decompositions of total effects into direct and indirect components and highlights that mediator endogeneity is a fundamental obstacle \citep{robins1992identifiability, pearl2001direct, imai2010identification}. Identification has been based on selection-on-observables assumptions, with the literature developing sensitivity analyses, nonparametric identification results, decompositions allowing interaction between treatment and the mediator, and machine-learning-based estimators \citep{vanderweele2014unification,celli2022causal,farbmacher2022causal}. In contrast, we allow selection into treatment through unobserved fixed effects and exploit within-unit changes through the DiD structure, while explicitly allowing the mediator to be time-varying, endogenous, and affected by the treatment.

Second, our paper is more closely related to recent work on mediation in DiD designs. \cite{deuchert2019direct} propose a DiD approach for direct and indirect effects with a binary mediator in principal strata. \cite{hsia2025causal} develop mediation analysis for DiD and panel data under common-trend-type assumptions and linear/additive specifications. More relatedly, \cite{schenk2024mediation} studies direct and indirect effects in DiD designs in the presence of a binary time-varying mediator. His identification approach relies on a set of parallel trends assumptions for the outcome and mediator, and mean independence of the treatment with respect to the untreated mediator. Our framework, on the other hand, allows for continuous, discrete, or mixed mediators. We also allow for dependence between the potential mediators and the treatment. Furthermore, \cite{huber2026difference} propose a DiD mediation framework for multivalued discrete or continuous treatments and mediators using conditional parallel trends and double/debiased machine learning. In their setting, the mediator is time-invariant, while it is time-varying in our framework.

Third, our analysis contributes to work on treatment-affected covariates in panel treatment effect models. Exogenous time-varying covariates can make parallel trends more plausible, but including endogenous time-varying covariates as controls in a TWFE regression can alter the TWFE estimand and introduce some bias when the treatment changes those covariates. Notably, \cite{caetano2024difference} formalizes several problems with TWFE regressions under conditional parallel trends, including a hidden linearity bias. \cite{brown2026direct} develop a panel approach with interactive fixed effects that permits treatment-affected covariates and separates the part of the estimated treatment effect due to the covariates from the part that is not. Our approach shares the view that treatment-affected variables should not be handled as ordinary controls. The difference is that we explicitly interpret such a variable as a mediator and identify direct and indirect effects for treated units.

Finally, our paper connects to extensions of DiD beyond the mean and linear models. Our nonlinear extension is in the spirit of changes-in-changes and nonlinear DiD approaches that identify counterfactual outcome distributions under restrictions on latent shocks rather than mean parallel trends \citep{athey2006identification,wooldridge2023simple,kim2025difference}. In our setting, the nonlinear argument is applied conditionally on the mediator path, which identifies distributional and quantile versions of direct effects. We also extend the baseline framework to multivariate mediators.

The rest of the paper proceeds as follows. Section \ref{anaF} sets out the baseline model and causal parameters. Section \ref{sec:baseline-id} derives the conventional estimand diagnostics and the mediator-path identification result, and Section \ref{sec:po-framework} restates that result in potential outcome notation. Section \ref{sec:extensions} treats multivariate mediators, nonlinear outcomes, and multiple treatment periods. Section \ref{sec:empirical} reports the railroad application, followed by the a summary and discussion in Section \ref{Conclusion}.
}

\section{Baseline structural framework}\label{anaF}
We first consider a canonical two-group, two-period DiD design. Individuals $i$ are observed in a baseline period $t=0$ and a post-treatment period $t=1$. Let $D_i\in\{0,1\}$ denote treatment-group status, $M_{it}$ a continuous mediator, and $Y_{it}$ the outcome. The model is specified as follows:
\begin{eqnarray}\label{seq1}
\left\{ \begin{array}{lcl}
     Y_{it} &=& \theta_i t D_i +\beta_{it} M_{it} + \lambda_i +\eta_t +U_{it},\\ \\
     M_{it} &=& \alpha_{i}t D_i+\gamma_i + \delta_t + V_{it},\\ \\
     D_i &=& h(\lambda_i,\gamma_i,\varepsilon_i),
     \end{array} \right.
\end{eqnarray}
where $t \in\{0,1\},$ the vector $(Y_{i0}, Y_{i1}, M_{i0}, M_{i1}, D_i)$ represents the observed data for each individual $i$, while the vector $\left(\theta_i, \alpha_i, \lambda_i, \gamma_i, \varepsilon_i, \{\beta_{it}, U_{it}, V_{it}\}_{t=0,1}\right)$ is latent. $\eta_t$ and $\delta_t$ are nonrandom time effects, $\lambda_i$ and $\gamma_i$ are time-invariant individual heterogeneity in the outcome and mediator equations, and $U_{it}$ and $V_{it}$ are time-varying shocks. $\theta_i$ is the direct effect of the treatment on the outcome $Y_{i1}$ in period 1, $\alpha_i$ is the effect of the treatment on the mediator $M_{i1}$,  and $\beta_{it}$ is the effect of $M_{it}$ on the outcome $Y_{it}$. Treatment status is determined by the selection equation $D_i=h(\lambda_i,\gamma_i,\varepsilon_i)$, where $h$ is an unknown function and $\varepsilon_i$ is an additional latent unobservable. Selection into treatment therefore depends on both $\lambda_i$ and $\gamma_i$, allowing individuals to sort into treatment on unobserved determinants of both the outcome and the mediator.

In this model, we allow the direct treatment effect to depend flexibly on the mediator path, i.e., $\theta_i\equiv \tilde{\theta}_i(M_{i0},M_{i1})$. The dependence between the fixed effects $\lambda_i$ and $\gamma_i$ is left unrestricted (e.g., $\lambda_i=\gamma_i$ is permitted). Moreover, the model allows the effect of the mediator on the outcome (i.e., $\beta_{it}$) to vary across individuals and over time. 

In the potential outcomes framework, let $M_{it}(d)$ denote the potential mediator under treatment $d$, and let $Y_{it}(d,m)$ be the potential outcome under treatment $d$ and mediator $m$. We write $Y_{it}(d)\equiv Y_{it}(d,M_{it}(d))$ and define $Y_{it}(d,M_{it}(d'))$ as the cross-world potential outcome, where the treatment is set to $d$ and the mediator is held at its value under treatment $d'$. Model \eqref{seq1} then yields
\begin{eqnarray*}
    Y_{it}(d) &\equiv& Y_{it}(d,M_{it}(d))= \theta_i t d + \beta_{it} M_{it}(d) +\lambda_i + \eta_t + U_{it},\\
    Y_{it}(d,M_{it}(d')) &\equiv& \theta_i t d + \beta_{it} M_{it}(d') +\lambda_i + \eta_t + U_{it},\\
           M_{it}(d) &\equiv& \alpha_i t d + \gamma_i + \delta_t + V_{it}. 
\end{eqnarray*}
Note that the above model implies that there are no anticipatory effects of the treatment in the baseline period 0, i.e., $Y_{i0}(0)=Y_{i0}(1)$ and $M_{i0}(0)=M_{i0}(1)$. This no-anticipation restriction is standard in DiD models. Using this potential outcome notation, we define the causal parameters of interest.

\begin{definition} The average total, direct, and indirect treatment effects on the treated in period 1 are defined as 
\begin{align*}
ATT  &\equiv \mathbb E[Y_{i1}(1,M_{i1}(1))-Y_{i1}(0,M_{i1}(0))\mid D_i=1],\\
ADTT(d) &\equiv \mathbb E[Y_{i1}(1,M_{i1}(d))-Y_{i1}(0,M_{i1}(d))\mid D_i=1],\\
AITT(d) &\equiv \mathbb E[Y_{i1}(d,M_{i1}(1))-Y_{i1}(d,M_{i1}(0))\mid D_i=1],
\end{align*}
for $d\in\{0,1\}$.
\end{definition}

These correspond to the natural direct and indirect treatment effects studied in the causal mediation literature  \citep{pearl2001direct,imai2010identification}, restricted to the treated subpopulation. $ADTT(d)$ captures the effect of changing treatment status while holding the mediator at its potential value $M_{i1}(d)$, and $AITT(d)$ is the effect of the treatment-induced change in the mediator while holding treatment fixed at $d$. By construction,
\begin{equation*}
ATT = ADTT(d) + AITT(1-d).
\end{equation*}
Model \eqref{seq1}  is additively separable in the direct and indirect channels: for each $d$, $Y_{i1}(1,M_{i1}(d))-Y_{i1}(0,M_{i1}(d))=\theta_{i}$, and $Y_{i1}(d,M_{i1}(1))-Y_{i1}(d,M_{i1}(0))=\beta_{i1}\alpha_i$.
Thus, the direct effect is invariant to the mediator value at which it is evaluated, while the indirect effect is invariant to the treatment level. Consequently,
\begin{align*}
ADTT(1) & = ADTT(0) = \mathbb{E}[\theta_i\mid D_i=1],\\
AITT(1) & = AITT(0) = \mathbb{E}[\beta_{i1}\alpha_i\mid D_i=1],
\end{align*} and $$ATT =\mathbb{E}[\theta_i\mid D_i=1] + \mathbb{E}[\beta_{i1}\alpha_i\mid D_i=1].$$ 
We therefore suppress the argument $d$ and write $ADTT$ and $AITT$ throughout. This additive separability of the two channels plays an important role in identifying these effects.

\section{Identification: conventional estimands and mediator path DiD}\label{sec:baseline-id}
In this section, we state the main identifying assumptions and then present the baseline identification results.  All identification results hold conditional on exogenous time-invariant covariates vector $X_i$ such that $X_i(0)=X_i(1)=X_i$.

\subsection{Identifying assumptions}
\begin{assumption}[Exogenous fixed effects]\label{ass1p}
    $(\lambda_i, \gamma_i,\varepsilon_i)\ \indep\ (\{V_{it},U_{it},\beta_{it}\}_{t=0,1},\alpha_i)$
\end{assumption}

\begin{assumption}[Conditional independence]\label{ass2p}
    $(\{V_{it}\}_{t=0,1},\alpha_i)\ \indep\ (\{U_{it},\beta_{it}\}_{t=0,1})\vert (\lambda_i,\gamma_i,\varepsilon_i)$
\end{assumption}
Assumption \ref{ass1p} requires the unobservables driving selection into treatment, ($\lambda_i,\gamma_i,\varepsilon_i$), to be independent of all other unobservables in the model except the direct effect $\theta_i$.  By allowing $\theta_i$ to depend on selection unobservables, the model permits selection on the direct gain from treatment.
Assumption \ref{ass2p} further requires that, conditional on the selection unobservables, the latent components in the mediator equation are independent of those in the outcome equation, again except for $\theta_i$. 

Together, Assumptions \ref{ass1p} and \ref{ass2p} can be viewed as a DiD analogue of the sequential ignorability assumption in \cite{imai2010identification}. They imply that, aside from the direct effect $\theta_i$, dependence between the mediator and the outcome arises only through the fixed effects $\lambda_i$ and $\gamma_i$. In particular, they rule out time-varying unobservables that jointly affect the mediator and the outcome. These assumptions yield the implications stated in Lemma \ref{lem:uncond_indep} below.

\begin{lemma}\label{lem:uncond_indep}
    Assumptions \ref{ass1p} and \ref{ass2p} imply the following:
                        $$(\{U_{it},\beta_{it}\}_{t=0,1})\ \indep\ (\{V_{it}\}_{t=0,1},\alpha_i, \lambda_i, \gamma_i,\varepsilon_i)$$
                   \end{lemma}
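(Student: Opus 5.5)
Set $A \equiv (\{U_{it},\beta_{it}\}_{t=0,1})$, $B \equiv (\{V_{it}\}_{t=0,1},\alpha_i)$ and $C \equiv (\lambda_i,\gamma_i,\varepsilon_i)$. In this notation Assumption \ref{ass1p} says $C \indep (A,B)$, Assumption \ref{ass2p} says $B \indep A \mid C$, and the claim of Lemma \ref{lem:uncond_indep} is $A \indep (B,C)$. The plan is to prove the factorization
\[
P(A\in a, B\in b, C\in c) = P(A\in a)\,P(B\in b, C\in c)
\]
for all measurable rectangles. This is a standard contraction-type argument, and I will write it with conditional distributions (regular conditional probabilities exist since all variables are real-vector valued).

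\textbf{Step 1: $A \indep C$ and $B \indep C$.} These are the marginal consequences of Assumption \ref{ass1p}, since independence of $C$ from the pair $(A,B)$ passes to each component. Consequently $P(A\in a\mid C)=P(A\in a)$ a.s., and likewise for $B$.

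\textbf{Step 2: factorize conditionally on $C$.} By Assumption \ref{ass2p},
\[
P(A\in a, B\in b\mid C) = P(A\in a\mid C)\,P(B\in b\mid C) \quad \text{a.s.}
\]
Substituting Step 1 into the first factor gives
\[
P(A\in a, B\in b\mid C) = P(A\in a)\,P(B\in b\mid C).
\]

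\textbf{Step 3: integrate over $c$.} Take expectations over $\{C\in c\}$:
\[
P(A\in a,B\in b,C\in c)=\mathbb E\bigl[\mathbf 1\{C\in c\}\,P(A\in a)\,P(B\in b\mid C)\bigr]=P(A\in a)\,P(B\in b,C\in c).
\]
Rectangles $b\times c$ form a $\pi$-system generating the product $\sigma$-field, so a monotone class ($\pi$–$\lambda$) argument extends this to $P(A\in a,(B,C)\in E)=P(A\in a)P((B,C)\in E)$ for every measurable $E$. That is exactly $A\indep(B,C)$.

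\textbf{Where the care is needed.} The only delicate point is Step 2. There one must use only the weak consequence $A\indep C$ of Assumption \ref{ass1p}, and interpret the conditional factorization almost surely. The full joint independence $C\indep(A,B)$ is actually stronger than needed, and $B\indep C$ plays no role in the final identity. Note also that this joint independence does not by itself give $A\indep B$, so Assumption \ref{ass2p} is genuinely required. No other obstacles are expected.
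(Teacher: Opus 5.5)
Your proof is correct and follows essentially the same route as the paper: the paper writes the joint law as $F_{(B,A)\mid C}F_C$, factors it to $F_{B\mid C}F_{A\mid C}F_C$ using Assumption \ref{ass2p}, replaces $F_{A\mid C}$ by $F_A$ using Assumption \ref{ass1p}, and recombines $F_{B\mid C}F_C=F_{(B,C)}$. Your Steps 1--3 do exactly this, and your regular-conditional-probability and $\pi$--$\lambda$ details make rigorous what the paper's manipulation of distribution functions leaves implicit.
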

\begin{proof}
    See Appendix \ref{pf_lem:uncond_indep}.
\end{proof}
Lemma \ref{lem:uncond_indep} is important for proving our main identification results.

\subsection{Why conventional estimands fail}\label{sec:conventional-estimands}

\subsubsection{Standard DiD and the total effect}\label{sec:standard-did-diagnostic}
Let $\Delta Y_i=Y_{i1}-Y_{i0}$ and $\Delta M_i=M_{i1}-M_{i0}$, and define the DiD estimand
$$\tau^{DiD}\equiv \mathbb E[\Delta Y_i\mid D_i=1]-\mathbb E[\Delta Y_i\mid D_i=0]. $$

Under model \eqref{seq1} and Assumptions \ref{ass1p} and \ref{ass2p}, we show that
\begin{equation*}
\tau^{DiD}
=\underbrace{\mathbb E[\theta_i\mid D_i=1]}_{\text{direct}}
+\underbrace{\mathbb E[\beta_{i1}]\,\mathbb E[\alpha_i]}_{\text{indirect}}
+\underbrace{\mathbb E[\beta_{i1}-\beta_{i0}]
\big(\mathbb E[\gamma_i\mid D_i=1]-\mathbb E[\gamma_i\mid D_i=0]\big)}_{\text{parallel-trends bias}}.
\end{equation*}

The last term is due to the violation of the parallel trends assumption,
\begin{align*}
    & \mathbb E[Y_{i1}(0)-Y_{i0}(0)\mid D_i=1]-\mathbb E[Y_{i1}(0)-Y_{i0}(0)\mid D_i=0]\\
=\ & \mathbb E[\beta_{i1}-\beta_{i0}]\big(\mathbb E[\gamma_i\mid D_i=1]-\mathbb E[\gamma_i\mid D_i=0]\big),
\end{align*}
which is nonzero whenever the average mediator effect changes over time and $\gamma_i$ is correlated with treatment status. Hence, $\tau^{DiD}$ generally mixes the direct effect, the indirect effect, and a parallel-trends bias, and has no causal interpretation on its own. When the average mediator effect is stable ($\mathbb E[\beta_{i1}]=\mathbb E[\beta_{i0}]$) or there is no treatment selection on the mediator fixed effects ($\mathbb E[\gamma_i\vert D_i]=\mathbb E[\gamma_i]$), the standard parallel trends assumption holds, the bias vanishes, and the DiD estimand recovers the average total treatment effect on the treated, $\tau^{DiD}=ATT=\mathbb E[\theta_i\mid D_i=1]+\mathbb E[\beta_{i1}]\,\mathbb E[\alpha_i]$.

\subsubsection{TWFE conditioning on $M_{it}$ and the direct effect}\label{sec:twfe-diagnostic}

 A natural question to ask is: does a TWFE estimand that controls for $M_{it}$ identify the direct treatment effect under our model? 
 To address this, we consider the following TWFE regression with a scalar mediator $M_{it}$: for $t\in\{0,1\}$, 
$$Y_{it}=\pi_1 tD_{i}+\pi_2 M_{it}+\phi_{i}+\zeta_{t}+\xi_{it}$$ 
where $\phi_i$ is the individual fixed effect, $\zeta_t$ is the time fixed effect, $\pi_1$ and $\pi_2$ are constant coefficients. In practice, the coefficient $\pi_1$ is often interpreted as an average direct treatment effect on the outcome. However, as shown below, it generally combines the true average direct effect with additional bias terms.

To characterize the TWFE estimands, we introduce the following notation. For any random variables $X$ and $Y$, let $\Pi(Y\mid X)$ denote the linear projection of $Y$ onto $X$, i.e., $\Pi(Y\mid X)\equiv \mathbb{E}(Y)+\frac{Cov(X,Y)}{Var(X)}[X-\mathbb{E}(X)]$.  Define the residualized $D_i$ with respect to $\Delta M_i$ as $\widetilde{D}_i\equiv D_i - \Pi(D_i\mid \Delta M_i)$ and the residualized $\Delta M_i$ with respect to $D_i$ as $\widetilde{\Delta M_i} \equiv \Delta M_i - \Pi(\Delta M_i \mid D_i)$.
Then, the TWFE estimands are $\pi_{1}=\frac{Cov(\widetilde{D}_{i},\Delta Y_{i})}{Var(\widetilde{D}_{i})}$, and $\pi_2 =  \frac{Cov(\widetilde{\Delta M_{i}},\Delta Y_{i})}{Var(\widetilde{\Delta M_{i}})}.$\footnote{Throughout this subsection, assume the relevant second moments are finite and the population rank condition holds: $0<Var(D_i)<\infty$, $0<Var(\Delta M_i)<\infty$, $Var(\widetilde{D}_i)>0$, and $Var(\widetilde{\Delta M}_{i})>0$.
Equivalently, after partialling out the mediator change, the treatment regressor has nonzero residual variation, and after partialling out treatment status, the mediator change has nonzero residual variation.}
The following claim shows that each of these estimands can be decomposed into three terms.
\begin{claim}\label{claim:interpretTWFE}
Under model \eqref{seq1}, Assumptions \ref{ass1p} and \ref{ass2p}, 
the estimands for $\pi_1$ and $\pi_2$ from the above TWFE regression can be decomposed as:
\begin{align*}
\pi_1
= & \ \mathbb{E}(\theta_{i}\mid D_{i}=1)\\
& +\mathbb{E}(\beta_{i1}-\beta_{i0})
\left(
\frac{Cov(D_{i},\gamma_{i})}{Var(\widetilde{D}_{i})}
-
\frac{\mathbb{E}(\alpha_{i})Cov(\Delta M_{i},M_{i0})}
{Var(\widetilde{\Delta M}_{i})}
\right) \tag{A} \label{bias:A} \\
& +\frac{\mathbb{E}(\alpha_{i})}{Var(\widetilde{\Delta M}_{i})}
\big(
\mathbb{E}(\alpha_{i})Cov(D_{i},\theta_{i}D_{i})
-
Cov(\Delta M_{i},\theta_{i}D_{i})
\big) \tag{B}\label{bias:B}.
\end{align*}
\begin{align*}
\pi_2
= & \ \mathbb{E}(\beta_{i1})\\
&  + \frac{\mathbb{E}(\beta_{i1}-\beta_{i0})}
{Var(\widetilde{\Delta M}_{i})}
\Big[
Cov(\Delta M_{i},M_{i0})
-
\mathbb{E}(\alpha_{i})Cov(D_{i},\gamma_{i})
\Big]   \tag{C}\label{bias:C} \\
& + \frac{1}{Var(\widetilde{\Delta M}_{i})}
\Big[
Cov(\Delta M_i,\theta_{i}D_{i})
-
\mathbb{E}(\alpha_i)Cov(D_i,\theta_iD_i)
\Big]. \tag{D} \label{bias:D}
\end{align*}
\end{claim}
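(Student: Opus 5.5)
The plan is to difference out the fixed effects, compute the needed covariances under the independence structure of Lemma \ref{lem:uncond_indep}, and substitute them into the Frisch--Waugh--Lovell formulas for $\pi_1$ and $\pi_2$.

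\textbf{Differenced equations.} From model \eqref{seq1},
\begin{align*}
\Delta M_i &= \alpha_i D_i + \Delta\delta + \Delta V_i,\\
\Delta Y_i &= \theta_i D_i + \beta_{i1}M_{i1}-\beta_{i0}M_{i0} + \Delta\eta + \Delta U_i .
\end{align*}
I rewrite the mediator terms as
\[
\beta_{i1}M_{i1}-\beta_{i0}M_{i0} = \beta_{i1}\Delta M_i + (\beta_{i1}-\beta_{i0})M_{i0}.
\]
By Lemma \ref{lem:uncond_indep}, $(\beta_{i0},\beta_{i1},U_{i0},U_{i1})$ is jointly independent of $(D_i,M_{i0},M_{i1},\alpha_i,\gamma_i)$. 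Here $D_i$ is a function of $(\lambda_i,\gamma_i,\varepsilon_i)$ and $M_{it}$ is a function of $(\alpha_i,\gamma_i,V_{it},D_i)$.

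\textbf{Key covariance identity.} For any $W_i$ measurable with respect to $(D_i,\alpha_i,\gamma_i,V_{i0},V_{i1})$, this independence gives
\[
Cov(W_i,\Delta Y_i) = Cov(W_i,\theta_iD_i) + \mathbb{E}(\beta_{i1})\,Cov(W_i,\Delta M_i) + \mathbb{E}(\beta_{i1}-\beta_{i0})\,Cov(W_i,M_{i0}).
\]
This uses $\mathbb{E}[\beta_{i1}\Delta M_i W_i]=\mathbb{E}\beta_{i1}\,\mathbb{E}[\Delta M_i W_i]$, the analogous identity for $M_{i0}$, and the fact that $\Delta U_i$ drops out. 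I apply the identity with $W_i=\widetilde D_i$ and with $W_i=\widetilde{\Delta M}_i$; both are linear in $(D_i,\Delta M_i)$, so they qualify.

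\textbf{Evaluating for $\pi_1$.} The residual is orthogonal to $\Delta M_i$, so $Cov(\widetilde D_i,\Delta M_i)=0$ and the $\mathbb{E}(\beta_{i1})$ term vanishes. I then use the projection formula
\[
\widetilde D_i = D_i - \frac{Cov(D_i,\Delta M_i)}{Var(\Delta M_i)}\,(\Delta M_i-\mathbb{E}\Delta M_i).
\]
Both $Cov(\widetilde D_i,\theta_iD_i)$ and $Cov(\widetilde D_i,M_{i0})$ expand into covariances with $D_i$ and with $\Delta M_i$.

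\textbf{Simplifications.} The following facts come from Lemma \ref{lem:uncond_indep} and Assumption \ref{ass1p}:
\begin{itemize}
\item Since $M_{i0}=\gamma_i+\delta_0+V_{i0}$ and $V_{i0}\indep(\lambda_i,\gamma_i,\varepsilon_i)$, we have $Cov(D_i,M_{i0})=Cov(D_i,\gamma_i)$.
\item Since $\alpha_i\indep D_i$, we have $Cov(D_i,\Delta M_i)=\mathbb{E}(\alpha_i)Var(D_i)$.
\end{itemize}
I then convert the FWL ratios into the stated normalizations via the standard identities
\[
\frac{Cov(D_i,\Delta M_i)}{Var(\Delta M_i)\,Var(\widetilde D_i)} = \frac{Cov(D_i,\Delta M_i)}{Var(D_i)\,Var(\widetilde{\Delta M}_i)} = \frac{\mathbb{E}(\alpha_i)}{Var(\widetilde{\Delta M}_i)}
\]
and $Var(D_i)Var(\widetilde{\Delta M}_i)=Var(\Delta M_i)Var(\widetilde D_i)$.

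The direct term is
\[
\frac{Cov(D_i,\theta_iD_i)}{Var(\widetilde D_i)} - \frac{\mathbb{E}(\alpha_i)Cov(\Delta M_i,\theta_iD_i)}{Var(\widetilde{\Delta M}_i)}.
\]
To reach the stated form, I rewrite $\frac{1}{Var(\widetilde D_i)}$ as $\frac{1}{Var(D_i)}+\frac{\mathbb{E}(\alpha_i)^2}{Var(\widetilde{\Delta M}_i)}$. This holds because $Var(\widetilde D_i)=Var(D_i)\bigl(1-\rho^2\bigr)$, where $\rho^2=\frac{\mathbb{E}(\alpha_i)^2Var(D_i)}{Var(\Delta M_i)}$. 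Since $Cov(D_i,\theta_iD_i)/Var(D_i)=\mathbb{E}(\theta_i\mid D_i=1)$ for binary $D_i$, this yields $\mathbb{E}(\theta_i\mid D_i=1)$ plus term \eqref{bias:B}.

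The $M_{i0}$ piece gives term \eqref{bias:A} after the same substitution: its first part is $Cov(D_i,\gamma_i)/Var(\widetilde D_i)$, and its second part is the ratio identity times $Cov(\Delta M_i,M_{i0})$. I expect this step to be the main obstacle: it is pure algebraic bookkeeping, and I would verify it by expressing everything in terms of $Var(D_i)$, $Var(\Delta M_i)$ and $Cov(D_i,\Delta M_i)$.

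\textbf{$\pi_2$.} The computation is symmetric. Here $Cov(\widetilde{\Delta M}_i,\Delta M_i)=Var(\widetilde{\Delta M}_i)$, which yields $\mathbb{E}(\beta_{i1})$. The remaining pieces are
\[
Cov(\widetilde{\Delta M}_i,M_{i0}) = Cov(\Delta M_i,M_{i0}) - \mathbb{E}(\alpha_i)Cov(D_i,\gamma_i)
\]
and
\[
Cov(\widetilde{\Delta M}_i,\theta_iD_i) = Cov(\Delta M_i,\theta_iD_i) - \mathbb{E}(\alpha_i)Cov(D_i,\theta_iD_i).
\]
These follow from $\Pi(\Delta M_i\mid D_i)$ having slope $\mathbb{E}(\alpha_i)$, and they give \eqref{bias:C} and \eqref{bias:D} directly.
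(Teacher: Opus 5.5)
Your proposal is correct and follows essentially the same route as the paper. Both difference the model and write $\beta_{i1}M_{i1}-\beta_{i0}M_{i0}=\beta_{i1}\Delta M_i+(\beta_{i1}-\beta_{i0})M_{i0}$, apply Frisch--Waugh--Lovell, and evaluate the covariances with Lemma \ref{lem:uncond_indep}. Both then convert normalizations via $Var(D_i)Var(\widetilde{\Delta M}_i)=Var(\Delta M_i)Var(\widetilde{D}_i)$.

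The only difference is cosmetic. You eliminate the $\mathbb{E}(\beta_{i1})$ terms at once using $Cov(\widetilde{D}_i,\Delta M_i)=0$ and $Cov(\widetilde{\Delta M}_i,\Delta M_i)=Var(\widetilde{\Delta M}_i)$. The paper instead expands each covariance separately and cancels those terms by hand.
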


Claim \ref{claim:interpretTWFE} shows that TWFE generally fails to identify the average direct treatment effect and the average mediator effects. Instead, the coefficients $\pi_1$ and $\pi_2$ combine the causal parameters of interest with several bias terms; see proof in Appendix \ref{pf_claim:interpretTWFE}.

For $\pi_1$, it can be decomposed into the ADTT and two bias terms. The first bias term (\ref{bias:A}) is driven by differences in the average mediator effects on the outcome across periods. This difference would imply that the evolution of untreated potential outcome will depend on both the level of untreated mediator and its change over time:
$$  \Delta Y_i(0) = \beta_{i1}\Delta M_i(0) + \tilde{\beta}_i M_{i0}(0) + \eta_1-\eta_0 + U_{i1}-U_{i0}$$
where $\tilde{\beta}_i = \beta_{i1}-\beta_{i0}$. However, the TWFE specification implicitly imposes that the evolution depends only on the change in mediators. This feature is analogous to the ``hidden linearity bias''\footnote{In \cite{caetano2024difference}, the ``hidden linearity bias'' arises from the fact that the evolution of untreated potential outcomes depends on time-invariant exogenous covariates, the level of time-varying exogenous covariates, and the change in these covariates, which TWFE regressions fail to capture. Instead, we focus on the endogenous time-varying covariates (mediators), but we can flexibly accommodate both time-invariant and time-varying exogenous covariates in our model.} in \cite{caetano2024difference}. However, unlike their setting, our framework does not suffer from the ``explicit linearity bias'', which arises when the conditional expectation of the change in untreated potential outcomes is nonlinear in the change in mediators over time. In our model, this relationship remains linear. In addition, relative to \cite{caetano2024difference}, our setting introduces an additional bias term (\ref{bias:B}) which is driven by the endogeneity of mediators. This bias appears when the average treatment effect on the mediator is nonzero and the direct treatment effects on the outcome, $\theta_i$, is correlated with other unobserved heterogeneity in the model.

For $\pi_2$, it conflates mediator effects on the outcome from both periods, as captured by the term (\ref{bias:C}), and includes an additional bias term (\ref{bias:D}) driven by the dependence between direct treatment effects $\theta_i$ and other unobserved heterogeneity in the model.

These bias terms disappear only under strong conditions. For example, suppose  $\mathbb{E}(\alpha_i)\neq 0$, if the average mediator effect is time-invariant, i.e., $\mathbb{E}(\beta_{i1})=\mathbb{E}(\beta_{i0})$, and the direct treatment effect on the outcome is independent of both the treatment effect on the mediator and the shocks in the mediator equation, i.e., $\theta_{i}\ \indep\ (\alpha_{i},V_{i1},V_{i0})$, which implies $Cov(\Delta M_i,\theta_{i}D_{i})=\mathbb{E}(\alpha_i)Cov(D_i,\theta_iD_i)$, then the TWFE regression conditional on $M_{it}$ identifies the average direct treatment effect on the treated $\mathbb{E}[\theta_i\mid D_i=1]$, as well as the average mediator effects in period 1, $\mathbb{E}(\beta_{i1})$.

Figure~\ref{fig:simulation_example} provides a Monte Carlo illustration of these results; see the simulation design in Appendix~\ref{apx:simulation_example}. Under this DGP, the true ATT and ADTT equal $2$ and $1$, respectively.
Nevertheless, standard DiD and TWFE estimators produce biased estimates of these parameters, with the sign reversals. This motivates the mediator-path identification strategy discussed in Section \ref{sec:path-id}.

\begin{figure}[!htbp]
  \centering
  \begin{minipage}[t]{0.48\textwidth}
    \centering
    \includegraphics[width=\linewidth]{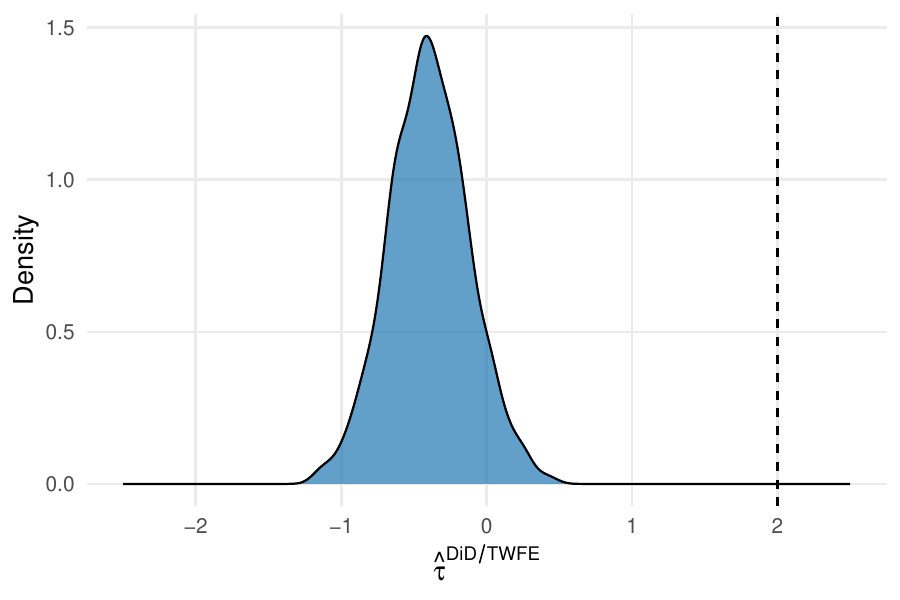}
    {\footnotesize (a) Total effect}
  \end{minipage}\hfill
  \begin{minipage}[t]{0.48\textwidth}
    \centering
    \includegraphics[width=\linewidth]{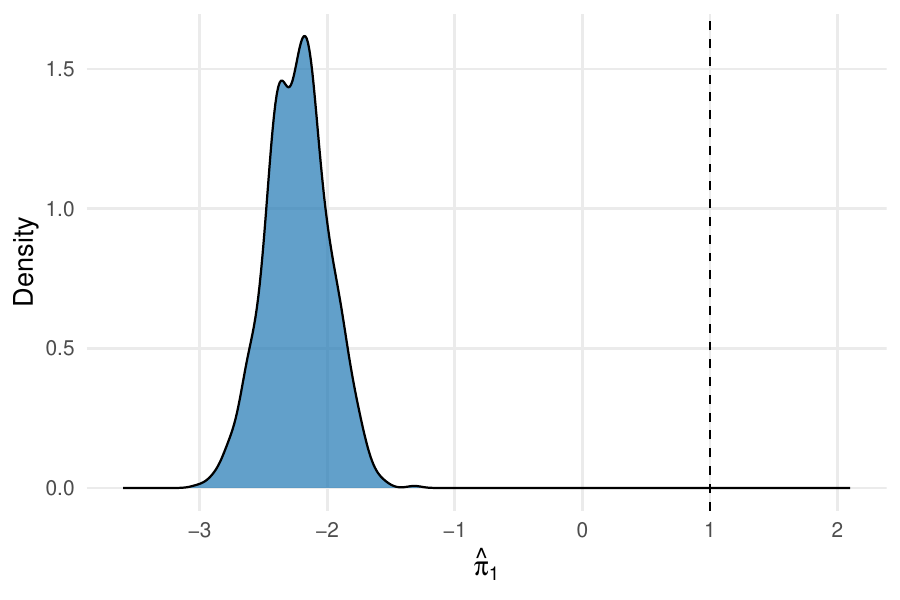}
    {\footnotesize (b) Direct effect}
  \end{minipage}
  \caption{Monte Carlo simulation of TWFE estimators}
  \label{fig:simulation_example}
  \vspace{0.3em}
  \par\footnotesize\raggedright
  \textit{Notes}: The figure reports the Monte Carlo distribution of the TWFE estimates based on 1000 simulation replications. Panel (a) shows the distribution of the TWFE coefficient on $tD_i$ from a regression of $Y_{it}$ on $tD_i$ with unit and time fixed effects, corresponding to the total effect. Panel (b) shows the distribution of the TWFE coefficient on $tD_i$ from a regression of $Y_{it}$ on $tD_i$ and $M_{it}$ with unit and time fixed effects, which is commonly interpreted as a direct effect. Vertical dashed lines indicate the true ATT in panel (a) and the true ADTT in panel (b).
\end{figure}

\subsection{Identification of direct effects}\label{sec:path-id}

{
The diagnostic results above show that neither the unconditional DiD contrast nor a TWFE regression controlling for the mediator generally recovers the direct effect. 
Hence, we propose an alternative approach; condition on the full observed mediator path \((M_{i0},M_{i1})\), construct a DiD contrast within each path, and then average these contrasts over the treated path distribution.
}

Define the conditional DiD estimand as
\begin{eqnarray*}
\tau^{DiD}(m_0,m_1)&\equiv& \mathbb{E}[\Delta Y_i\mid D_i=1,M_{i0}=m_0,M_{i1}=m_1]\\
&&-\ \mathbb{E}[\Delta Y_i\mid D_i=0,M_{i0}=m_0,M_{i1}=m_1]
\end{eqnarray*}

{
\begin{assumption}[Mediator path overlap and regularity]\label{ass:pathoverlap}
The treatment is nondegenerate conditional on the mediator path, $0<\mathbb P(D_i=1\vert M_{i0},M_{i1})<1$ a.s., the relevant first moments are finite.
\end{assumption}
}

\begin{proposition}[Identification of direct treatment effect]\label{prop:direffect}
Consider the structural model \eqref{seq1}. Under Assumptions \ref{ass1p},
\ref{ass2p}, and \ref{ass:pathoverlap}, 
 the conditional average direct effect on the treated is identified as the conditional DiD estimand,
   $$ CADTT(m_0,m_1) \equiv\mathbb{E} [ \theta_i\mid D_i=1, M_{i0}=m_0,M_{i1}=m_1]=\tau^{DiD}(m_0,m_1)
    .$$

\vspace{0.3cm}
 Consequently, the average direct effect on the treated is identified,
    $$ ADTT  \equiv \mathbb{E} [ \theta_i\mid D_i=1]=\mathbb{E}\big[ \tau^{DiD}(M_{i0},M_{i1}) \big| D_i =1\big]$$
    \end{proposition}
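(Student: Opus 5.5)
The plan is to compute the conditional mean of $\Delta Y_i$ within each treatment group and mediator path, and to show that everything except the direct-effect term is the same for treated and untreated units on a given path. Under model \eqref{seq1}, first-differencing the outcome equation gives
\begin{equation*}
\Delta Y_i=\theta_i D_i+\beta_{i1}M_{i1}-\beta_{i0}M_{i0}+(\eta_1-\eta_0)+(U_{i1}-U_{i0}),
\end{equation*}
since the individual effect $\lambda_i$ cancels.

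Conditioning on $D_i=d$, $M_{i0}=m_0$, $M_{i1}=m_1$ turns the mediator terms into $m_1\,\mathbb E[\beta_{i1}\mid d,m_0,m_1]-m_0\,\mathbb E[\beta_{i0}\mid d,m_0,m_1]$. Assumption \ref{ass:pathoverlap} guarantees that both conditional expectations in $\tau^{DiD}(m_0,m_1)$ are well defined and finite for almost every path.

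The key step is to show that $(\{U_{it},\beta_{it}\}_{t=0,1})$ is independent of $(D_i,M_{i0},M_{i1})$.
\begin{itemize}
\item $D_i=h(\lambda_i,\gamma_i,\varepsilon_i)$ is a measurable function of $(\lambda_i,\gamma_i,\varepsilon_i)$.
\item $M_{it}=\alpha_i tD_i+\gamma_i+\delta_t+V_{it}$, with $\delta_t$ nonrandom, is a measurable function of $(\{V_{it}\}_{t=0,1},\alpha_i,\lambda_i,\gamma_i,\varepsilon_i)$.
\item Lemma \ref{lem:uncond_indep} says $(\{U_{it},\beta_{it}\}_{t})$ is independent of this whole vector.
\end{itemize}
Independence is preserved under measurable maps, so $(\{U_{it},\beta_{it}\}_t)\ \indep\ (D_i,M_{i0},M_{i1})$.

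Hence $\mathbb E[\beta_{it}\mid D_i=d,M_{i0}=m_0,M_{i1}=m_1]=\mathbb E[\beta_{it}]$ and $\mathbb E[U_{i1}-U_{i0}\mid D_i=d,m_0,m_1]=\mathbb E[U_{i1}-U_{i0}]$ for both $d\in\{0,1\}$. Therefore
\begin{equation*}
\mathbb E[\Delta Y_i\mid D_i=d,m_0,m_1]=d\,\mathbb E[\theta_i\mid D_i=1,m_0,m_1]+m_1\mathbb E[\beta_{i1}]-m_0\mathbb E[\beta_{i0}]+(\eta_1-\eta_0)+\mathbb E[\Delta U_i].
\end{equation*}
Differencing across $d=1$ and $d=0$ cancels every term except the first, which gives $\tau^{DiD}(m_0,m_1)=CADTT(m_0,m_1)$.

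Two points need care here. First, $\theta_i$ may depend on the mediator path (and on the selection unobservables) and is not covered by the independence. It enters only through $\theta_iD_i$, so it appears solely in the treated conditional mean, where it is simply left conditioned as $CADTT(m_0,m_1)$. Second, a product such as $\beta_{i1}M_{i1}$ factors as $m_1\mathbb E[\beta_{i1}]$ exactly because $\beta_{i1}$ is independent of the conditioning path, not merely uncorrelated with it.

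The second claim then follows from the law of iterated expectations over the treated path distribution:
\begin{equation*}
\mathbb E[\theta_i\mid D_i=1]=\mathbb E\big[\mathbb E[\theta_i\mid D_i=1,M_{i0},M_{i1}]\,\big|\,D_i=1\big]=\mathbb E\big[\tau^{DiD}(M_{i0},M_{i1})\mid D_i=1\big].
\end{equation*}
This uses the finite first moments from Assumption \ref{ass:pathoverlap}. Overlap ensures that $\tau^{DiD}$ is defined almost surely with respect to the treated path distribution.

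The main obstacle is the measurability and independence step. It must be verified that the mediator path is generated only from variables on the right-hand side of Lemma \ref{lem:uncond_indep}, with no stray dependence through $\theta_i$. If any conditional expectation involving $\beta_{it}$ or $U_{it}$ still differed between the treated and untreated groups on a given path, the cancellation in the difference would fail.
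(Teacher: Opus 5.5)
Your proposal is correct and follows essentially the same route as the paper's proof. Both first-difference the outcome equation, then invoke Lemma~\ref{lem:uncond_indep} to strip the conditioning on the treatment and mediator path from the $\beta_{it}$ and $U_{it}$ terms, then difference across $d\in\{0,1\}$ and integrate over the treated path distribution by iterated expectations. The only difference is presentation: you package the key step as independence from a measurable function of $(\{V_{it}\}_t,\alpha_i,\lambda_i,\gamma_i,\varepsilon_i)$, whereas the paper substitutes the structural equations directly into the conditioning event.
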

\begin{proof}
    See Appendix \ref{pf_prop:direffect}.
\end{proof}

The proposition shows that the conditional average direct effect on the treated is identified by the DiD estimand conditional on the mediator path. What is surprising is that identification does not require the associated conditional parallel trends assumption to hold. To see this, let
$\Delta Y_i(0)\equiv Y_{i1}(0,M_{i1}(0))-Y_{i0}(0,M_{i0}(0))$. We show that 
\begin{align*}
&\mathbb E[\Delta Y_i(0)\mid D_i=1,M_{i1}=m_1,M_{i0}=m_0]
-\mathbb E[\Delta Y_i(0)\mid D_i=0,M_{i1}=m_1,M_{i0}=m_0]\\
&= \  -\mathbb E[\beta_{i1}]\cdot\mathbb E[\alpha_i\mid D_i=1,M_{i0}=m_0,M_{i1}=m_1],
\end{align*}
which is nonzero if both the average mediator effect on the outcome, $\mathbb E[\beta_{i1}]$,
and the conditional average treatment effect on the mediator, $\mathbb E[\alpha_i\mid D_i=1,M_{i0}=m_0,M_{i1}=m_1]$,
are different from zero. In general, when treatment affects the mediator and the mediator in turn affects the outcome, the standard conditional parallel trends assumption will fail. The reason is that the observed mediator path (referred to as \textit{factual path}) is contaminated by the treatment, and when the mediator matters to explain the variations in the outcome, this contamination may translate into different trends for treatment and control groups.  However, under Assumption \ref{ass1p} and \ref{ass2p}, parallel trends conditional on untreated potential mediator path (referred to as \textit{counterfactual path}) holds, that is,
\begin{align*}
&\mathbb E[Y_{i1}(0,m_1)-Y_{i0}(0,m_0)\mid D_i=1,M_{i1}(0)=m_1,M_{i0}(0)=m_0]\\
&\qquad =\mathbb E[Y_{i1}(0,m_1)-Y_{i0}(0,m_0)\mid D_i=0,M_{i1}(0)=m_1,M_{i0}(0)=m_0].
\end{align*}
The challenge is that the DiD estimand conditional on the counterfactual mediator path is infeasible. The result in Proposition \ref{prop:direffect} shows that the DiD conditional on the factual mediator path identifies the conditional average direct effect on the treated.

By integrating over the distribution of the factual mediator path conditional on the treatment group, we can identify the ADTT as
\begin{eqnarray*}
    \mathbb E[\theta_i\vert D_i=1] &=&\int \tau^{DiD}(m_0,m_1) dF_{M_{i0},M_{i1}\vert D_i=1}(m_0,m_1)\\
    &=&\int \mathbb E[\Delta Y_i \vert D_i=1, M_{i0}=m_0, M_{i1}=m_1]\\
    && \qquad \qquad - \mathbb E[\Delta Y_i \vert D_i=0, M_{i0}=m_0, M_{i1}=m_1] dF_{M_{i0},M_{i1}\vert D_i=1}(m_0,m_1).
\end{eqnarray*}

\begin{remark}[Binary and discrete mediator paths]
The conditional DiD argument extends to discrete or mixed mediators. 
These settings are special cases of the extension considered in Section \ref{sec:nonabsorbing-mediation}.
\end{remark}

\subsection{Identification of indirect and mediator effects} In general, while we show that the average direct effect is identifiable under our assumptions, the identification of the average indirect effect requires stronger assumptions, which may not be plausible in some applications. 
Define $\alpha^{DiD}\equiv\mathbb E[\Delta M_i\mid D_i=1]-\mathbb E[\Delta M_i\mid D_i=0],$
and consider the following assumption.
\begin{assumption}[Stable average mediator effect]\label{ass:stablebeta}
    $\mathbb E[\beta_{i1}]=\mathbb E[\beta_{i0}]$
\end{assumption}
This assumption rules out any time trend in the coefficient $\beta_{it}$ (e.g., $\beta_{it}=\beta_i+f(t)$). But, it allows for time heterogeneity in $\beta_{it}$. It only requires that $\beta_{it}$ has the same mean over time (e.g., $\beta_{it}=\beta_i+\tau_{it}$, where $\mathbb E[\tau_{it}]=0$).
When this assumption is combined with the previous unconditional parallel trends assumption, the standard DiD estimand identifies the average total effect on the treated. The average indirect effect on the treated is therefore obtained as the residual difference between the standard DiD estimand and the integral of the conditional DiD over the conditional distribution of the factual mediator path for the treated group. The following proposition holds.

\begin{proposition}\label{prop:medeffect}
    {Consider the structural model \eqref{seq1}.}
    Suppose $\alpha^{DiD}\neq0$. 
    Then under Assumptions \ref{ass1p}, \ref{ass2p}, \ref{ass:pathoverlap}, and \ref{ass:stablebeta}, $\mathbb E[\beta_{i1}]$ is identified as
    \begin{eqnarray*}
        \mathbb E[\beta_{i1}]=\frac{\tau^{DiD}-\int \tau^{DiD}(M_{i1}=m_1, M_{i0}=m_0) dF_{M_{i1},M_{i0}\vert D_i=1}(m_1,m_0)}{\alpha^{DiD}}.
    \end{eqnarray*}
\end{proposition}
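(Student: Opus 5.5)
The plan is to compute the numerator and denominator of the displayed ratio separately in terms of model primitives and then divide. The computation has three steps, followed by a check on the denominator.

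\emph{Step 1: the unconditional DiD estimand.} By the decomposition of $\tau^{DiD}$ derived in Section \ref{sec:standard-did-diagnostic} (valid under model \eqref{seq1} and Assumptions \ref{ass1p}--\ref{ass2p}),
\[
\tau^{DiD}=\mathbb E[\theta_i\mid D_i=1]+\mathbb E[\beta_{i1}]\,\mathbb E[\alpha_i]+\mathbb E[\beta_{i1}-\beta_{i0}]\big(\mathbb E[\gamma_i\mid D_i=1]-\mathbb E[\gamma_i\mid D_i=0]\big).
\]
If one prefers not to rely on the in-text derivation, I would reprove it directly. Write $\Delta Y_i=\theta_iD_i+\beta_{i1}M_{i1}-\beta_{i0}M_{i0}+(\eta_1-\eta_0)+\Delta U_i$, and substitute $M_{it}=\alpha_itD_i+\gamma_i+\delta_t+V_{it}$. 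By Lemma \ref{lem:uncond_indep}, $(\{U_{it},\beta_{it}\})$ is independent of $(\{V_{it}\},\alpha_i,\gamma_i,D_i)$, since $D_i$ is a function of $(\lambda_i,\gamma_i,\varepsilon_i)$. Hence expectations of products such as $\beta_{i1}\gamma_i$ or $\beta_{i1}V_{i1}$ conditional on $D_i$ factor into $\mathbb E[\beta_{i1}]$ times the conditional mean of the other factor, and $\mathbb E[\Delta U_i\mid D_i]$ is constant. Assumption \ref{ass1p} gives $\alpha_i,V_{it}\indep D_i$, so $\mathbb E[\alpha_i\mid D_i=1]=\mathbb E[\alpha_i]$ and the $V$ terms cancel across groups. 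Under Assumption \ref{ass:stablebeta} the bias term vanishes, so
\[
\tau^{DiD}=\mathbb E[\theta_i\mid D_i=1]+\mathbb E[\beta_{i1}]\,\mathbb E[\alpha_i].
\]

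\emph{Step 2: the integrated conditional DiD.} By Proposition \ref{prop:direffect}, which requires Assumptions \ref{ass1p}, \ref{ass2p} and \ref{ass:pathoverlap},
\[
\int \tau^{DiD}(m_0,m_1)\,dF_{M_{i0},M_{i1}\mid D_i=1}=\mathbb E[\theta_i\mid D_i=1].
\]
The numerator of the displayed ratio therefore equals $\mathbb E[\beta_{i1}]\,\mathbb E[\alpha_i]$.

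\emph{Step 3: the mediator DiD.} We have $\Delta M_i=\alpha_iD_i+(\delta_1-\delta_0)+\Delta V_i$, and the fixed effect $\gamma_i$ differences out. By Assumption \ref{ass1p}, $(\alpha_i,V_{i0},V_{i1})\indep D_i$, so $\alpha^{DiD}=\mathbb E[\alpha_i\mid D_i=1]=\mathbb E[\alpha_i]$.

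\emph{Conclusion and main obstacle.} Since $\alpha^{DiD}\neq 0$ by hypothesis, dividing the numerator from Step 2 by $\alpha^{DiD}$ from Step 3 yields $\mathbb E[\beta_{i1}]$, which is the claimed formula. The only delicate point is the justification in Step 1 that cross terms factor. This relies on Lemma \ref{lem:uncond_indep} and on $\theta_i$ appearing only through $\theta_iD_i$, whose group contrast is exactly the direct effect. Note also that the identified numerator equals $AITT=\mathbb E[\beta_{i1}\alpha_i\mid D_i=1]=\mathbb E[\beta_{i1}]\,\mathbb E[\alpha_i]$, again by the same independence.
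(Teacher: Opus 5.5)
Your proposal is correct and follows essentially the same route as the paper. You reduce $\tau^{DiD}$ to $\mathbb E[\theta_i\mid D_i=1]+\mathbb E[\beta_{i1}]\,\mathbb E[\alpha_i]$ by removing the parallel-trends bias term with Assumption \ref{ass:stablebeta}, subtract the integrated path-conditional DiD from Proposition \ref{prop:direffect}, and divide by $\alpha^{DiD}=\mathbb E[\alpha_i]$. The only difference is cosmetic: you factor the cross terms directly through Lemma \ref{lem:uncond_indep}, while the paper uses iterated expectations conditional on $\gamma_i$ together with Assumption \ref{ass1p}, which amounts to the same argument.
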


\begin{proof} 
See Appendix \ref{pf_prop:medeffect}.
\end{proof}

\section{Potential outcome formulation}\label{sec:po-framework} 
{
We now extend the mediator path identification argument from the linear structural model to the general potential outcome framework. The structural model in Sections \ref{anaF}--\ref{sec:baseline-id} is useful for showing why conventional estimands fail and why path conditioning works. 
The potential outcome formulation below states the same logic directly in terms of potential outcomes and potential mediator paths.

Consider
}
\begin{eqnarray}\label{seqgeneral}
\left\{ \begin{array}{lcl}
     Y_{it} &=&  Y_{it}(1,M_{it}(1)) D_i+Y_{it}(0,M_{it}(0))(1-D_i),\\ \\
     M_{it} &=& M_{it}(1)D_i+M_{it}(0)(1-D_i),
     \end{array} \right.
\end{eqnarray}
where $Y_{it}(d,m_t)$ and $M_{it}(d)$ are potential outcome and potential mediator when the treatment $D_i$ and the mediator $M_{it}$ are exogenously set to $d$ and $m_t$, respectively. 
\begin{assumption}[Conditional parallel trends]\label{conditionalPT}
        \begin{eqnarray*}
    &&\mathbb E[Y_{i1}(0,m_1)-Y_{i0}(0,m_0)\vert D_i=1,M_{i1}(0)=m_1,M_{i0}(0)=m_0]\\
    &&=\mathbb E[Y_{i1}(0,m_1)-Y_{i0}(0,m_0)\vert D_i=0,M_{i1}(0)=m_1,M_{i0}(0)=m_0] \text{ for all } m_1, m_0.
    \end{eqnarray*}
        \end{assumption}
This assumption states that conditional on the counterfactual mediator path $(M_{i0}(0),M_{i1}(0))$, the average potential outcome trend is independent of treatment status. As discussed earlier, Model \eqref{seq1} and Assumptions \ref{ass1p}--\ref{ass2p} imply this assumption. This assumption allows for a violation of parallel trends conditional on the observed mediator path $(M_{i0},M_{i1})$. 

\begin{assumption}[Conditional orthogonality between potential mediator and untreated potential outcome trends]\label{conditionalorthorganity}
\begin{eqnarray*}
&& \mathbb E[Y_{i1}(0,m_1)-Y_{i0}(0,m_0) \vert D_i=d,M_{i1}(0)=m_1,M_{i0}(0)=m_0]\\
&& = \mathbb E[Y_{i1}(0,m_1)-Y_{i0}(0,m_0)\vert D_i=d, M_{i1}(1)=m_1,M_{i0}(1)=m_0] \text{ for all } m_1, m_0, d.
\end{eqnarray*}
       \end{assumption}
This assumption states that conditional on treatment status, the average potential outcome trend is independent of the potential mediator path. This assumption is also implied by Model \eqref{seq1} and Assumptions \ref{ass1p}--\ref{ass2p}. 

The framework considered here covers settings where the sequential ignorability assumption in \cite{imai2010identification} could be violated. We leverage on the time dimension in the data structure to remove time-invariant factors that could compromise the validity of sequential ignorability.  Assumptions \ref{conditionalPT} and \ref{conditionalorthorganity} could be viewed as a generalization of sequential exogeneity to DiD settings. 
As is customary in the DiD literature, we also rely on the assumption that the treatment has no anticipatory effects on the outcome and mediator (Assumption \ref{noanticipation}). 
\begin{assumption}[No anticipation effects]\label{noanticipation}
$\,$
\begin{enumerate}[(i)]
    \item \label{noanticipation1} $Y_{i0}(1,m)=Y_{i0}(0,m),$ a.s. for all $m$.
    \item \label{noanticipation2} $M_{i0}(1)=M_{i0}(0),$ a.s.
\end{enumerate}
\end{assumption}
The following proposition holds.
\begin{proposition}\label{prop:generalresultADTT}
In the potential outcome model \eqref{seqgeneral},  under Assumptions \ref{ass:pathoverlap}, \ref{conditionalPT}, \ref{conditionalorthorganity},
and \ref{noanticipation}(\ref{noanticipation1}), the results in Proposition \ref{prop:direffect} hold. 

\end{proposition}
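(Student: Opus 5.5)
The plan is a direct add-and-subtract argument on the treated arm of the path-conditional DiD. The one counterfactual term that appears is transported to the control arm in two steps. First Assumption \ref{conditionalorthorganity} (with $d=1$) moves the conditioning from the factual treated path to the counterfactual path. Then Assumption \ref{conditionalPT} moves it across treatment groups. In the potential outcome model \eqref{seqgeneral}, the conditional direct effect in Proposition \ref{prop:direffect} is read as
$$CADTT(m_0,m_1)\equiv \mathbb E[Y_{i1}(1,m_1)-Y_{i1}(0,m_1)\mid D_i=1,M_{i0}=m_0,M_{i1}=m_1].$$
This is the natural direct effect holding the mediator at its treated value $M_{i1}(1)=m_1$, and it reduces to $\mathbb E[\theta_i\mid\cdot]$ under model \eqref{seq1}.

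\emph{Step 1 (treated arm).} On $\{D_i=1\}$ we have $M_{it}=M_{it}(1)$ and $Y_{it}=Y_{it}(1,M_{it}(1))$. Hence on $\{D_i=1,M_{i0}=m_0,M_{i1}=m_1\}$ we get $\Delta Y_i=Y_{i1}(1,m_1)-Y_{i0}(1,m_0)$. By Assumption \ref{noanticipation}(\ref{noanticipation1}), $Y_{i0}(1,m_0)=Y_{i0}(0,m_0)$. Adding and subtracting $Y_{i1}(0,m_1)$ then gives
\begin{align*}
\mathbb E[\Delta Y_i\mid D_i=1,M_{i0}=m_0,M_{i1}=m_1]
&= CADTT(m_0,m_1)\\
&\quad+\mathbb E[Y_{i1}(0,m_1)-Y_{i0}(0,m_0)\mid D_i=1,M_{i1}(1)=m_1,M_{i0}(1)=m_0].
\end{align*}

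\emph{Step 2 (transport the counterfactual trend).} Apply Assumption \ref{conditionalorthorganity} with $d=1$ to replace the conditioning on $(M_{i0}(1),M_{i1}(1))=(m_0,m_1)$ by conditioning on $(M_{i0}(0),M_{i1}(0))=(m_0,m_1)$, keeping $D_i=1$. Then Assumption \ref{conditionalPT} switches $D_i=1$ to $D_i=0$.

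\emph{Step 3 (control arm).} On $\{D_i=0\}$ the observed mediators are $M_{it}(0)$ and the observed outcomes are $Y_{it}(0,M_{it}(0))$. The resulting term is therefore exactly $\mathbb E[\Delta Y_i\mid D_i=0,M_{i0}=m_0,M_{i1}=m_1]$. Subtracting it from Step 1 yields $\tau^{DiD}(m_0,m_1)=CADTT(m_0,m_1)$.

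\emph{Step 4 (averaging).} The unconditional statement follows by iterated expectations over $F_{M_{i0},M_{i1}\mid D_i=1}$. This gives $\mathbb E[Y_{i1}(1,M_{i1}(1))-Y_{i1}(0,M_{i1}(1))\mid D_i=1]$, which is $ADTT(1)$ and coincides with $\mathbb E[\theta_i\mid D_i=1]$ in the structural case.

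The main obstacle I expect is not algebraic but measure-theoretic. The assumptions are stated ``for all $m_0,m_1$'', yet for continuous mediators the conditioning events have probability zero. The equalities must therefore be read as statements about regular conditional expectations, holding for $F_{M_{i0},M_{i1}\mid D_i=1}$-almost every path. I would handle this using Assumption \ref{ass:pathoverlap}. Because $0<\mathbb P(D_i=1\mid M_{i0},M_{i1})<1$, the control-group conditional mean is well defined on the support of the treated path distribution. In Step 2, the two conditioning schemes (on the $M(1)$-path versus the $M(0)$-path) must be evaluated at the same points. Finiteness of first moments justifies the add-and-subtract step and the final integration.
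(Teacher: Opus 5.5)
Your proof is correct and is essentially the paper's argument: the paper also rewrites the treated arm using Assumption \ref{noanticipation}(\ref{noanticipation1}), adds and subtracts $Y_{i1}(0,m_1)$, and cancels the counterfactual trend against the control arm using Assumptions \ref{conditionalPT} and \ref{conditionalorthorganity} (applied in the opposite order to yours, which is immaterial), then integrates over $F_{M_{i0},M_{i1}\mid D_i=1}$ by iterated expectations. Your measure-theoretic caveat goes beyond what the paper does, but Assumption \ref{ass:pathoverlap} only guarantees that the control-arm conditional mean is defined; it does not ensure that the Step 2 conditional, given $D_i=1$ and the counterfactual path, is defined at treated factual paths. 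Like the paper, you are implicitly taking that from the ``for all $m_0,m_1$'' clause of Assumption \ref{conditionalorthorganity}.
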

As we discussed in the structural model in the previous section, the unconditional parallel trends assumption may not hold because of the heterogeneity in the mediator effect on the outcome and the dependence between the mediator path and the treatment. However, if one is willing to assume that parallel trends holds, then she can recover the average indirect effect. 

\begin{assumption}[Unconditional parallel trends]\label{unconditionalPT}
    \begin{eqnarray*}
        \mathbb E[Y_1(0,M_1(0))-Y_0(0,M_0(0))\vert D=1]=\mathbb E[Y_1(0,M_1(0))-Y_0(0,M_0(0))\vert D=0]
    \end{eqnarray*}
\end{assumption}

Under Assumptions \ref{noanticipation} and \ref{unconditionalPT}, we can show that $\tau^{DiD}=ADTT+AITT$. Therefore, the AITT is identified as the difference between the DiD estimand and the identified ADTT. Hence, the following proposition holds.
\begin{proposition}\label{prop:generalresultAITT}
In the potential outcome model \eqref{seqgeneral}, under Assumptions \ref{ass:pathoverlap}, \ref{conditionalPT}, \ref{conditionalorthorganity}, \ref{noanticipation}, and \ref{unconditionalPT}, the AITT is identified: $$AITT=\tau^{DiD}-\int \tau^{DiD}(m_0,m_1) dF_{M_{i0},M_{i1}\vert D_i=1}(m_0,m_1).$$ 
\end{proposition}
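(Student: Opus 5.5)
The proof has two steps. First, show that under Assumptions \ref{noanticipation} and \ref{unconditionalPT} the unconditional DiD estimand equals the ATT, so that $\tau^{DiD}=ADTT+AITT$. Second, subtract the path-conditional identification of the ADTT delivered by Proposition \ref{prop:generalresultADTT}. Throughout, $ADTT$ and $AITT$ are read as $ADTT(0)$ and $AITT(1)$, so that the decomposition $ATT=ADTT(0)+AITT(1)$ holds by construction in the general potential outcome model \eqref{seqgeneral}. We do not need the additive separability of the structural model.

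\textbf{Step 1 (DiD equals ATT).} By \eqref{seqgeneral},
\[
\mathbb E[\Delta Y_i\mid D_i=1]=\mathbb E[Y_{i1}(1,M_{i1}(1))-Y_{i0}(1,M_{i0}(1))\mid D_i=1].
\]
Assumption \ref{noanticipation}(\ref{noanticipation2}) gives $M_{i0}(1)=M_{i0}(0)$. Combining this with Assumption \ref{noanticipation}(\ref{noanticipation1}), evaluated at the random value $m=M_{i0}(0)$, gives $Y_{i0}(1,M_{i0}(1))=Y_{i0}(0,M_{i0}(0))$ a.s. The almost-sure statement is for all $m$, so substituting the random mediator is legitimate. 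Hence
\[
\mathbb E[\Delta Y_i\mid D_i=1]=ATT+\mathbb E[Y_{i1}(0,M_{i1}(0))-Y_{i0}(0,M_{i0}(0))\mid D_i=1].
\]
Likewise,
\[
\mathbb E[\Delta Y_i\mid D_i=0]=\mathbb E[Y_{i1}(0,M_{i1}(0))-Y_{i0}(0,M_{i0}(0))\mid D_i=0].
\]
Assumption \ref{unconditionalPT} makes the two untreated trend terms equal, so $\tau^{DiD}=ATT$.

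\textbf{Step 2 (subtract the ADTT).} Proposition \ref{prop:generalresultADTT}, valid under Assumptions \ref{ass:pathoverlap}, \ref{conditionalPT}, \ref{conditionalorthorganity} and \ref{noanticipation}(\ref{noanticipation1}), gives
\[
ADTT=\int \tau^{DiD}(m_0,m_1)\,dF_{M_{i0},M_{i1}\mid D_i=1}(m_0,m_1).
\]
Writing $ATT=ADTT+AITT$ and rearranging yields the stated formula for the AITT. Finiteness of the relevant first moments, part of Assumption \ref{ass:pathoverlap}, justifies splitting the expectations.

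\textbf{Main obstacle: which direct effect is identified.} We must check that the ADTT identified in Proposition \ref{prop:generalresultADTT} is the one that pairs with $AITT$ in the decomposition of the ATT. In the general model the direct effect depends on which mediator it is evaluated at.

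The path-conditional argument conditions on the factual treated path, which equals $(M_{i0}(0),M_{i1}(1))$. The treated-group conditional mean therefore involves $Y_{i1}(1,M_{i1}(1))$. Assumption \ref{conditionalorthorganity} lets us replace conditioning on the counterfactual path by conditioning on the treated path for the untreated trend $Y_{i1}(0,m_1)-Y_{i0}(0,m_0)$. The resulting contrast is therefore $Y_{i1}(1,M_{i1}(1))-Y_{i1}(0,M_{i1}(1))$, which is the direct effect $ADTT(1)$.

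So the identified object is naturally $ADTT(1)$, and the consistent decomposition is $ATT=ADTT(1)+AITT(0)$. I would make this explicit: state that the $AITT$ in the proposition is $AITT(0)$, the effect with treatment held at $0$. With that convention the algebra in Steps 1–2 goes through verbatim. I would add a remark that in model \eqref{seq1} the distinction disappears, since there $ADTT(1)=ADTT(0)$ and $AITT(1)=AITT(0)$.
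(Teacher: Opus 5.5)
Your argument is correct and is essentially the paper's own (omitted) proof: no anticipation plus Assumption \ref{unconditionalPT} give $\tau^{DiD}=ATT$, and subtracting the path-conditional identification from Proposition \ref{prop:generalresultADTT} leaves the indirect effect, where that proposition's appendix proof delivers $\mathbb E[Y_{i1}(1,M_{i1}(1))-Y_{i1}(0,M_{i1}(1))\mid D_i=1]=ADTT(1)$. Your closing conclusion that the identified indirect effect is $AITT(0)$ is right and matches the paper's definition of $AITT$ in the multivariate section, so strike your opening convention reading the parameters as $ADTT(0)$ and $AITT(1)$, because it contradicts that conclusion and would make Step 2 invalid in the general model.
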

The proof of this proposition is straightforward from the discussion above and is therefore omitted.

\section{Extensions}\label{sec:extensions}
{
\subsection{Multivariate mediators}\label{sec:multi-mediators}

In practice, there are multiple mediators that are responses to the treatment and influence the main outcome of interest. This section extends our identification results to multivariate mediators.  

\subsubsection{Model and identification of average direct effect}

Let $\boldsymbol M_{it}\in\mathbb R^K$ be a vector of continuous mediators. The
multivariate model is
\begin{equation}\label{eq:mmulti}
\left\{
\begin{array}{lcl}
Y_{it} &=& \theta_i tD_i+\boldsymbol\beta_{it}'\boldsymbol M_{it}
          +\lambda_i+\eta_t+U_{it},\\[0.2cm]
\boldsymbol M_{it} &=& \boldsymbol\alpha_i tD_i
          +\boldsymbol\gamma_i+\boldsymbol\delta_t+\boldsymbol V_{it},\\[0.2cm]
D_i &=& h(\lambda_i,\boldsymbol\gamma_i,\varepsilon_i).
\end{array}
\right.
\end{equation}
where
$\boldsymbol\beta_{it},\boldsymbol\alpha_i,\boldsymbol\gamma_i,
\boldsymbol\delta_t,\boldsymbol V_{it}\in\mathbb R^K$. The average indirect
effect on the treated is
\[
AITT
=
\mathbb E\!\left[
Y_{i1}(0,\boldsymbol M_{i1}(1))
-
Y_{i1}(0,\boldsymbol M_{i1}(0))
\mid D_i=1
\right]
=
\mathbb E[\boldsymbol\beta_{i1}'\boldsymbol\alpha_i\mid D_i=1].
\]

\begin{assumption}\label{ass:multi_exog}$\,$
\begin{enumerate}[(i)]
    \item $(\lambda_i,\boldsymbol\gamma_i,\varepsilon_i)
\ \indep\
(\{\boldsymbol V_{it},U_{it},\boldsymbol\beta_{it}\}_{t=0,1},
\boldsymbol\alpha_i)$.
\item $(\{\boldsymbol V_{it}\}_{t=0,1},\boldsymbol\alpha_i)
\ \indep\
(\{U_{it},\boldsymbol\beta_{it}\}_{t=0,1})
\mid
(\lambda_i,\boldsymbol\gamma_i,\varepsilon_i).$
\item $0<\mathbb P(D_i=1 \vert \boldsymbol M_{i0}, \boldsymbol M_{i1})<1$ a.s.
\end{enumerate}
\end{assumption}

Let $\Delta Y_i\equiv Y_{i1}-Y_{i0}$, $\Delta\boldsymbol M_i\equiv \boldsymbol M_{i1}-\boldsymbol M_{i0}.$
For
$(\boldsymbol m_0,\boldsymbol m_1)$ in the treated mediator-path support,
define the path-conditional DiD estimand
\begin{eqnarray*}
\tau^{DiD}(\boldsymbol m_0,\boldsymbol m_1)
&\equiv&
\mathbb E[\Delta Y_i\mid D_i=1,\boldsymbol M_{i0}=\boldsymbol m_0,\boldsymbol M_{i1}=\boldsymbol m_1]\\
&& -\ \mathbb E[\Delta Y_i\mid D_i=0,\boldsymbol M_{i0}=\boldsymbol m_0,\boldsymbol M_{i1}=\boldsymbol m_1].
\end{eqnarray*}
Define
the vector mediator DiD estimand
\[
\boldsymbol\alpha^{DiD}
\equiv
\mathbb E[\Delta\boldsymbol M_i\mid D_i=1]
-
\mathbb E[\Delta\boldsymbol M_i\mid D_i=0].
\]

\begin{proposition}[Direct effect and mediator response with multivariate mediators]
\label{prop:multi_direct_alpha}
Consider the multivariate-mediator structural model \eqref{eq:mmulti}. Under Assumption \ref{ass:multi_exog},
$\mathbb E[\boldsymbol\alpha_i]=\boldsymbol\alpha^{DiD},$
and
\[
ADTT\equiv \mathbb E[\theta_i\mid D_i=1]=\int
\tau^{DiD}(\boldsymbol m_0,\boldsymbol m_1)
\,dF_{\boldsymbol M_{i0},\boldsymbol M_{i1}\mid D_i=1}
(\boldsymbol m_0,\boldsymbol m_1)=\mathbb E[\tau^{DiD}(\boldsymbol M_{i0},\boldsymbol M_{i1}) \vert D_i=1].
\]
\end{proposition}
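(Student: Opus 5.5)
The plan is to treat the multivariate model \eqref{eq:mmulti} exactly as the scalar model in Proposition \ref{prop:direffect}, with scalar products replaced by inner products. The first step is the vector analogue of Lemma \ref{lem:uncond_indep}: combining parts (i) and (ii) of Assumption \ref{ass:multi_exog} gives
\[
(\{U_{it},\boldsymbol\beta_{it}\}_{t=0,1})\ \indep\ (\{\boldsymbol V_{it}\}_{t=0,1},\boldsymbol\alpha_i,\lambda_i,\boldsymbol\gamma_i,\varepsilon_i).
\]
The argument is the same as in the scalar case. It factorizes the joint density conditional on $(\lambda_i,\boldsymbol\gamma_i,\varepsilon_i)$ using (ii), and then uses (i) to drop the conditioning in each factor.

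\textbf{Mediator response.} Differencing the mediator equation gives
\[
\Delta\boldsymbol M_i=\boldsymbol\alpha_i D_i+\Delta\boldsymbol\delta+\Delta\boldsymbol V_i .
\]
The fixed effect $\boldsymbol\gamma_i$ cancels. Since $D_i=h(\lambda_i,\boldsymbol\gamma_i,\varepsilon_i)$, part (i) makes $(\boldsymbol\alpha_i,\boldsymbol V_{i0},\boldsymbol V_{i1})$ independent of $D_i$. Hence $\mathbb E[\Delta\boldsymbol M_i\mid D_i=d]=d\,\mathbb E[\boldsymbol\alpha_i]+\Delta\boldsymbol\delta+\mathbb E[\Delta\boldsymbol V_i]$, and differencing across $d$ yields $\boldsymbol\alpha^{DiD}=\mathbb E[\boldsymbol\alpha_i]$.

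\textbf{Direct effect.} Differencing the outcome equation gives
\[
\Delta Y_i=\theta_iD_i+\boldsymbol\beta_{i1}'\boldsymbol M_{i1}-\boldsymbol\beta_{i0}'\boldsymbol M_{i0}+\Delta\eta+\Delta U_i .
\]
Condition on $D_i=d$, $\boldsymbol M_{i0}=\boldsymbol m_0$ and $\boldsymbol M_{i1}=\boldsymbol m_1$. The event $\{D_i=d,\boldsymbol M_{i0}=\boldsymbol m_0,\boldsymbol M_{i1}=\boldsymbol m_1\}$ is measurable with respect to $(\lambda_i,\boldsymbol\gamma_i,\varepsilon_i,\boldsymbol\alpha_i,\boldsymbol V_{i0},\boldsymbol V_{i1})$, because $\boldsymbol M_{it}$ is a function of these variables and of $d$. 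By the independence above, this event is therefore independent of $(\boldsymbol\beta_{i0},\boldsymbol\beta_{i1},U_{i0},U_{i1})$. It follows that
\[
\mathbb E[\Delta Y_i\mid D_i=d,\boldsymbol M_{i0}=\boldsymbol m_0,\boldsymbol M_{i1}=\boldsymbol m_1]
= d\,\mathbb E[\theta_i\mid D_i=d,\boldsymbol m_0,\boldsymbol m_1]+\mathbb E[\boldsymbol\beta_{i1}]'\boldsymbol m_1-\mathbb E[\boldsymbol\beta_{i0}]'\boldsymbol m_0+\Delta\eta+\mathbb E[\Delta U_i].
\]
Every term except the first is the same for $d=0$ and $d=1$. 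Their difference is therefore $\tau^{DiD}(\boldsymbol m_0,\boldsymbol m_1)=CADTT(\boldsymbol m_0,\boldsymbol m_1)$. Part (iii) ensures that both conditional expectations are well defined on the treated path support. Integrating against $F_{\boldsymbol M_{i0},\boldsymbol M_{i1}\mid D_i=1}$ and applying the law of iterated expectations then gives $ADTT$.

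\textbf{Main obstacle.} The delicate step is making the conditioning rigorous for continuous $\boldsymbol M$. The conditioning event has probability zero, so I would phrase the independence as conditional independence of $(\boldsymbol\beta_{it},U_{it})$ from $(D_i,\boldsymbol M_{i0},\boldsymbol M_{i1})$. This follows because the latter vector is a measurable function of variables that are jointly independent of the former. No claim is needed about $\theta_i$ given the path, since it is simply carried along as the $CADTT$. I also need finiteness of the relevant first moments, in particular $\mathbb E|\boldsymbol\beta_{it}|$ and $\mathbb E|U_{it}|$, which I take as implicit, as in Assumption \ref{ass:pathoverlap}.
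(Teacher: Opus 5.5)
Your proposal is correct and takes the same route as the paper. The paper omits this proof as a direct analogue of Proposition \ref{prop:direffect}, and your argument reproduces its three steps: a vector version of Lemma \ref{lem:uncond_indep}; noting that $(D_i,\boldsymbol M_{i0},\boldsymbol M_{i1})$ is a function of variables jointly independent of $(\{U_{it},\boldsymbol\beta_{it}\}_{t=0,1})$, so every non-$\theta_i$ term cancels in the path-conditional contrast; and the mediator parallel-trends argument from the proof of Proposition \ref{prop:medeffect} for $\mathbb E[\boldsymbol\alpha_i]=\boldsymbol\alpha^{DiD}$. Your handling of the probability-zero conditioning events, and your use of part (iii) to make the control-group conditional mean well defined on the treated path support, usefully makes explicit what the paper leaves implicit.
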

The proof of this proposition is similar to that of Proposition \ref{prop:direffect} and is therefore omitted.

Under the stable mediator effect assumption ($\mathbb E[\boldsymbol \beta_{i1}]=\mathbb E[\boldsymbol \beta_{i0}]$), we can recover the average indirect effect AITT as the residual difference between the standard DiD estimand and the average direct effect ADTT. In general, when the mediator effect is heterogeneous over time, we cannot recover the AITT without further assumptions.

\subsection{Extension to Nonlinear Difference-in-Differences}
We consider the model below where the treatment is binary, $D\in \{0,1\}$, and there are two time periods, $t\in \{0,1\}$. This model is an extension of the changes-in-changes model in \cite{athey2006identification} to include a mediator. We drop the subscript $i$ here for simplicity.

\begin{eqnarray}\label{seqnonlineardid}
\left\{ \begin{array}{lcl}
     Y_{t} &=&  g_t(t D,M_{t},U_{t})\\ \\
     M_{t} &=& \psi_t(t D, V_{t})
     \end{array} \right.
\end{eqnarray}
In the above model, $U_t$ is a scalar unobservable, while $V_t$ is a vector of unobservables. $g_t$ and $\psi_t$ are unknown functions. 
\begin{assumption}\label{ass:nonlineardid} $\,$
    \begin{enumerate}[(i)]
        \item \label{ass1:nonlineardid} $U_1 \vert D=d, M_0=m_0, M_1=m_1 \sim^{d} U_0 \vert D=d, M_0=m_0, M_1=m_1$.
        \item \label{ass2:nonlineardid} The function $g_t(0,m_t,.)$ is continuous and strictly increasing.
        \item \label{ass3:nonlineardid} $Supp\{U_t \vert D=1, M_0=m_0, M_1=m_1\} \subseteq Supp\{U_t \vert D=0, M_0=m_0, M_1=m_1\}$.
    \end{enumerate}
\end{assumption}
Assumption \ref{ass:nonlineardid}(\ref{ass1:nonlineardid}) states that the distribution of the outcome unobserved heterogeneity $U_t$ conditional on $(D,M_0,M_1)$ is stable over time. Assumption \ref{ass:nonlineardid}(\ref{ass2:nonlineardid}) is a smoothness and monotonicity condition that ensures that the function $g_t(0,m_t,.)$ is invertible. Assumption \ref{ass:nonlineardid}(\ref{ass3:nonlineardid}) is a support condition that requires that the support of the outcome unobserved heterogeneity $U_t$ for the treated group conditional on the joint vector $(M_0,M_1)$ is included in that of the control group. These assumptions combined imply that the counterfactual distribution $\mathbb P(Y_1(0,m_1) \leq y \vert D=1, M_0=m_0, M_1=m_1)$ is identified, which helps identify the average direct effect ADTT. 

Define $Y_t(0,m_t)\equiv g_t(0,m_t,U_t)$, and $Y_t(0,M_t(0))\equiv g_t(0,M_t(0),U_t)$. 
\begin{proposition}\label{prop:nonlinear}
    Under Assumption \ref{ass:nonlineardid}, we have
    \begin{eqnarray*}
        && \mathbb P(Y_1(1,m_1) \leq y \vert D=1, M_0=m_0, M_1=m_1) = \mathbb P(Y_1 \leq y \vert D=1, M_0=m_0, M_1=m_1),\\
       && \mathbb P(Y_1(0,m_1) \leq y \vert D=1, M_0=m_0, M_1=m_1) =  \\
       && \qquad \qquad F_{Y_0 \vert D=1, M_0=m_0, M_1=m_1}\left(F^{-1}_{Y_0 \vert D=0, M_0=m_0, M_1=m_1}(F_{Y_1 \vert D=0, M_0=m_0, M_1=m_1}(y))\right)
    \end{eqnarray*}
\end{proposition}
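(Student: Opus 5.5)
The first equality follows by conditioning on treatment and the realized mediator path. On the event $\{D=1, M_0=m_0, M_1=m_1\}$ we have $Y_1 = g_1(1, M_1, U_1) = g_1(1, m_1, U_1)$. We read $Y_1(1,m_1)\equiv g_1(1,m_1,U_1)$, the natural analogue of the definition given for $d=0$. Hence $Y_1(1,m_1)=Y_1$ on this event, and the two conditional CDFs coincide. This step is essentially a consistency statement and needs no assumption beyond the model \eqref{seqnonlineardid}.

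For the second equality I will run the Athey--Imbens changes-in-changes argument cell by cell. Fix $(m_0,m_1)$ and write $\mathcal C_d$ for the event $\{D=d, M_0=m_0, M_1=m_1\}$.

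\textbf{Control group, period 1.} On $\mathcal C_0$, $Y_1 = g_1(0,m_1,U_1)$. By Assumption \ref{ass:nonlineardid}(\ref{ass2:nonlineardid}),
$$F_{Y_1\mid \mathcal C_0}(y)=F_{U_1\mid\mathcal C_0}\bigl(g_1^{-1}(0,m_1;y)\bigr),$$
and Assumption \ref{ass:nonlineardid}(\ref{ass1:nonlineardid}) lets us replace $F_{U_1\mid\mathcal C_0}$ with $F_{U_0\mid\mathcal C_0}$.

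\textbf{Both groups, period 0.} Here $tD=0$, so $Y_0=g_0(0,m_0,U_0)$ on both $\mathcal C_0$ and $\mathcal C_1$. This gives
$$F_{Y_0\mid\mathcal C_d}(y)=F_{U_0\mid\mathcal C_d}\bigl(g_0^{-1}(0,m_0;y)\bigr), \qquad d=0,1.$$

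\textbf{Composite map.} Define the mapping $k(y)\equiv g_0\bigl(0,m_0,g_1^{-1}(0,m_1;y)\bigr)$. Then
$$F^{-1}_{Y_0\mid\mathcal C_0}\bigl(F_{Y_1\mid\mathcal C_0}(y)\bigr) = k(y)$$
at points $y$ where $u=g_1^{-1}(0,m_1;y)$ lies in the support of $U_0\mid\mathcal C_0$. Plugging this into $F_{Y_0\mid\mathcal C_1}$ gives
$$F_{U_0\mid\mathcal C_1}\bigl(g_1^{-1}(0,m_1;y)\bigr)=F_{U_1\mid\mathcal C_1}\bigl(g_1^{-1}(0,m_1;y)\bigr),$$
again using (\ref{ass1:nonlineardid}), now with $d=1$. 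The right-hand side equals $\mathbb P(g_1(0,m_1,U_1)\le y\mid\mathcal C_1)=\mathbb P(Y_1(0,m_1)\le y\mid \mathcal C_1)$.

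\textbf{Main obstacle: the inversion step.} The quantile identity $F^{-1}_{U_0\mid\mathcal C_0}(F_{U_0\mid\mathcal C_0}(u))=u$ holds only on the support of $U_0\mid\mathcal C_0$ and only up to flat pieces of the CDF. I will handle this as Athey and Imbens do, using the generalized inverse $F^{-1}(q)=\inf\{y:F(y)\ge q\}$. The support inclusion in Assumption \ref{ass:nonlineardid}(\ref{ass3:nonlineardid}) ensures that any $u$ in the treated support lies in the control support. At such $u$ the composite returns a point where $F_{U_0\mid\mathcal C_1}$ takes the correct value, because any flat piece of $F_{U_0\mid\mathcal C_0}$ carries no mass for the treated group either, again by support inclusion. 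For $u$ outside the treated support, both sides equal $0$ or $1$.

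Continuity of $g_t(0,m_t,\cdot)$ guarantees that $Y_0$ and $U_0$ have matching quantile structure, so the composite map in $U$-space translates exactly into $Y$-space. Integrating the resulting counterfactual distribution over $F_{M_0,M_1\mid D=1}$ then delivers identification of the ADTT, as claimed.
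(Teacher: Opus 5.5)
Your proposal is correct and is essentially the paper's proof. Your $k(y)$ is exactly the paper's $T(m_0,m_1,y)=g_0(0,m_0,g_1^{-1}(0,m_1;y))$: it is recovered by inverting the control-cell relation $F_{Y_1\mid D=0,M_0=m_0,M_1=m_1}(y)=F_{Y_0\mid D=0,M_0=m_0,M_1=m_1}(T(m_0,m_1,y))$ and then plugged into the treated period-$0$ CDF via time invariance with $d=1$, and the first equality is the same consistency observation.

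Your generalized-inverse discussion is more explicit than the paper's, which simply asserts the inversion on the support. Two small points need care. First, the flat-piece argument tacitly requires the treated $U_t$ to have no atom at the right end of a flat stretch of the control CDF; continuous $U_t$, as in Athey--Imbens and implicitly in the paper, suffices. Second, your remark that both sides equal $0$ or $1$ off the treated support is inaccurate when that support has gaps, although the same flat-piece argument already covers those points.
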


\begin{proof}
    See Appendix \ref{apx:nonlinear}.
\end{proof}

{
}

\subsection{Multiple treatment periods}

\subsubsection{Multiple treatment periods with no dynamic treatment effects}\label{sec:nonabsorbing-mediation}
Consider the following model specification: for $t\in \mathcal T\equiv \{0,1,\ldots,T\}$,
\begin{equation}\label{eq:multiplenonstaggered}
\left\{
\begin{array}{lcl}
Y_{it}&=&\theta_{it}D_{it}+\beta_{it}M_{it}+\lambda_i+\eta_t+U_{it},\\[0.2cm]
M_{it}&=&\psi_t(D_{it},\gamma_i,\tilde{V}_{it}),\\[0.2cm]
D_{it}&=&h_t(\lambda_i,\gamma_i,\varepsilon_{it}),
\end{array}
\right.
\end{equation}
where $D_{it}$ is a binary treatment indicator for individual $i$ at time $t$, and $\theta_{it}$ is the direct effect of the treatment on the outcome, which is allowed to vary across individuals and time. The variables $\tilde{V}_{it}$ and $\varepsilon_{it}$ could be multivariate. All other variables are the same as described in the model \eqref{seq1}. In the baseline period $(t=0)$, no individuals are treated ($D_{i0}=0$ for all $i$). This model assumes that past and future treatment statuses do not influence the current outcome and mediator. A similar restriction has been imposed on the outcome in \cite{dechaisemartin2020two} in the absence of mediators. 

The functions $\psi_t$ and $h_t$ are unrestricted. For example,  when the mediator $M$ is continuous, we can set $\psi_t(D_{it},\gamma_i,\tilde{V}_{it})=\alpha_{it} D_{it} + \gamma_i+\delta_t + V_{it}$, where $\tilde{V}_{it}=(\alpha_{it},V_{it})$. When the mediator $M$ is binary, we can set $\psi_t(D_{it},\gamma_i,\tilde{V}_{it})=\mathbbm{1}\{\alpha_{it} D_{it} + \gamma_i+\delta_t + V_{it}>0\}.$ A generalized version of the identifying assumption in the two-period setting is given in Assumption \ref{ass:multiplenonstaggered} below. 
\begin{assumption}\label{ass:multiplenonstaggered}$\,$
    \begin{enumerate}[(i)]
        \item  $(\lambda_i, \gamma_i,\{\varepsilon_{it}\}_{t\in \mathcal T})\ \indep\ (\{\tilde{V}_{it},U_{it},\beta_{it}\}_{t\in \mathcal T})$.
   \item  $\{\tilde{V}_{it}\}_{t\in \mathcal T}\ \indep\ (\{U_{it},\beta_{it}\}_{t\in \mathcal T})\vert (\lambda_i, \gamma_i,\{\varepsilon_{it}\}_{t\in \mathcal T})$.
   \item $0<\mathbb P(D_{it}=1\vert M_{i0},M_{it})<1$ a.s.
    \end{enumerate}
\end{assumption}
Under Assumption \ref{ass:multiplenonstaggered}, the standard parallel trends assumption may not hold for the outcome and the mediator. The mediator can be continuous, discrete, or mixed.  
Define $\Delta Y_{it}\equiv Y_{it}-Y_{i0}$ and $\Delta M_{it} \equiv M_{it}-M_{i0}$. The proposition below generalizes the results in Proposition \ref{prop:direffect}. \begin{proposition}\label{prop:directeffect_multiple}
Under Assumption \ref{ass:multiplenonstaggered}, the conditional average direct effect on the contemporaneous treated group is identified as
\begin{eqnarray*}
    \mathbb E[\theta_{it}\vert D_{it}=1,M_{it}=m_t,M_{i0}=m_0] &=& \mathbb E[\Delta Y_{it} \vert D_{it}=1,M_{it}=m_t,M_{i0}=m_0]\\
    && \qquad -\mathbb E[\Delta Y_{it} \vert D_{it}=0,M_{it}=m_t,M_{i0}=m_0].
\end{eqnarray*}
The above estimand can be integrated over the distribution of the vector $(M_{it},M_{i0})$ conditional on the contemporaneous treatment group $D_{it}=1$ to obtain the average direct effect on the treated group. More precisely, the following holds.
\begin{eqnarray}
    \mathbb E[\theta_{it}\vert D_{it}=1] 
    &=&\int \mathbb E[\Delta Y_{it} \vert D_{it}=1, M_{it}=m_t, M_{i0}=m_0]\label{eq:multiADTT}\\
    && \qquad \qquad - \mathbb E[\Delta Y_{it} \vert D_{it}=0, M_{it}=m_t, M_{i0}=m_0] dF_{M_{it},M_{i0}\vert D_{it}=1}(m_t,m_0). \nonumber
\end{eqnarray}
\end{proposition}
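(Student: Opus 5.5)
The strategy is the same as in Proposition \ref{prop:direffect}, applied to the pair of periods $(0,t)$. Write $\Delta Y_{it}=\theta_{it}D_{it}+\beta_{it}M_{it}-\beta_{i0}M_{i0}+(\eta_t-\eta_0)+\lambda_i-\lambda_i+(U_{it}-U_{i0})$. The fixed effect $\lambda_i$ cancels, and we use $D_{i0}=0$. Conditioning on $(D_{it}=d,M_{it}=m_t,M_{i0}=m_0)$ gives
\begin{align*}
\mathbb E[\Delta Y_{it}\mid D_{it}=d,M_{it}=m_t,M_{i0}=m_0]
={}& d\,\mathbb E[\theta_{it}\mid D_{it}=d,M_{it}=m_t,M_{i0}=m_0]
+m_t\,\mathbb E[\beta_{it}\mid \cdot]-m_0\,\mathbb E[\beta_{i0}\mid \cdot]\\
&+\eta_t-\eta_0+\mathbb E[U_{it}-U_{i0}\mid \cdot].
\end{align*}
Here the mediator values are known constants on the conditioning event. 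The proof therefore reduces to showing that the conditional means of $\beta_{it}$, $\beta_{i0}$ and $U_{it}-U_{i0}$ given $(D_{it},M_{it},M_{i0})$ do not depend on $d$. The difference of the two conditional expectations then leaves exactly $\mathbb E[\theta_{it}\mid D_{it}=1,M_{it}=m_t,M_{i0}=m_0]$. Overlap in Assumption \ref{ass:multiplenonstaggered}(iii) ensures that both conditional expectations are well defined on the support of the treated.

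The key step is the analogue of Lemma \ref{lem:uncond_indep}: $(\{U_{is},\beta_{is}\}_{s\in\mathcal T})\ \indep\ (\{\tilde V_{is}\}_{s\in\mathcal T},\lambda_i,\gamma_i,\{\varepsilon_{is}\}_{s\in\mathcal T})$. It follows from Assumption \ref{ass:multiplenonstaggered}(i)--(ii) by the same argument. Writing $W=(\lambda_i,\gamma_i,\{\varepsilon_{is}\})$, $A=\{\tilde V_{is}\}$ and $B=\{U_{is},\beta_{is}\}$, we have $B\indep W$ from (i) and $A\indep B\mid W$ from (ii). Hence the joint density factors as $f(a\mid w)f(b\mid w)f(w)=f(a\mid w)f(b)f(w)$, so $B\indep (A,W)$. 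Next, $D_{it}=h_t(\lambda_i,\gamma_i,\varepsilon_{it})$, $M_{it}=\psi_t(D_{it},\gamma_i,\tilde V_{it})$ and $M_{i0}=\psi_0(0,\gamma_i,\tilde V_{i0})$ are all measurable functions of $(A,W)$. So $(D_{it},M_{it},M_{i0})\indep B$, and the conditional means of $\beta_{it}$, $\beta_{i0}$, $U_{it}$ and $U_{i0}$ reduce to their unconditional means, which in particular do not depend on $d$. This gives the first display of Proposition \ref{prop:directeffect_multiple}.

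Integrating the conditional identity against $dF_{M_{it},M_{i0}\mid D_{it}=1}$ and applying the law of iterated expectations then yields \eqref{eq:multiADTT}.

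The main obstacle is bookkeeping with $\theta_{it}$. Because $\theta_{it}$ is not included in the independence statements, it may depend on everything, and we never need its conditional law beyond the treated arm, where it is the target. A second point to check is that $\theta_{it}$ enters only through $D_{it}$ and not through $D_{i0}$, which holds since $D_{i0}=0$. One should also confirm that the multivariate $\tilde V_{it}$, which absorbs $\alpha_{it}$, is covered by $A$, and that no step uses linearity of $\psi_t$. Nothing does, since we only use measurability. Consequently the argument goes through for discrete, binary, or mixed mediators, with conditional expectations defined via regular conditional distributions.
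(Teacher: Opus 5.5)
Your proposal is correct and follows essentially the same route as the paper: substitute the model into $\Delta Y_{it}$, use the independence of $(\{U_{is},\beta_{is}\}_{s\in\mathcal T})$ from $(\{\tilde V_{is}\}_{s\in\mathcal T},\lambda_i,\gamma_i,\{\varepsilon_{is}\}_{s\in\mathcal T})$ to replace the conditional means of the $\beta$ and $U$ terms by unconditional means that cancel across arms, and then integrate over the treated path distribution by iterated expectations. The only difference is presentational: you derive that independence explicitly from parts (i)--(ii) via the factorization of Lemma~\ref{lem:uncond_indep} and argue through measurability of $(D_{it},M_{it},M_{i0})$ in $(A,W)$, whereas the paper writes out the conditioning events through $h_t$ and $\psi_t$ and simply invokes the independence.
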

\begin{proof}
    See Appendix \ref{pf_prop:direffectmultiple}.
\end{proof}
When $\psi_t(1,\gamma_i,\tilde{V}_{it})=\psi_t(0,\gamma_i,\tilde{V}_{it})$ a.s., then $M_{it}$ is a standard time-varying covariate. The identification strategy developed in this paper still applies.

To understand the relevance of the results in Proposition \ref{prop:directeffect_multiple}, we devote a special attention to the unconditional DiD estimand $\mathbb E[\Delta Y_{it} \vert D_{it}=1]-\mathbb E[\Delta Y_{it} \vert D_{it}=0]$. Under Assumption \ref{ass:multiplenonstaggered}, we can show that
\begin{eqnarray*}
    \mathbb E[\Delta Y_{it} \vert D_{it}=1]-\mathbb E[\Delta Y_{it} \vert D_{it}=0] &=& \mathbb E[\theta_{it} \vert D_{it}=1]\\
    && + \mathbb E[\beta_{it}](\mathbb E[M_{it}\vert D_{it}=1]-\mathbb E[M_{it}\vert D_{it}=0])\\
    && -\mathbb E[\beta_{i0}](\mathbb E[M_{i0}\vert D_{it}=1]-\mathbb E[M_{i0}\vert D_{it}=0]).
\end{eqnarray*}
The term $\mathbb E[\beta_{it}](\mathbb E[M_{it}\vert D_{it}=1]-\mathbb E[M_{it}\vert D_{it}=0]) -\mathbb E[\beta_{i0}](\mathbb E[M_{i0}\vert D_{it}=1]-\mathbb E[M_{i0}\vert D_{it}=0])$ is the bias that stems from ignoring the mediator $M$. We call it the \textit{mediator bias}. If the average mediator effect is stable over time,  (i.e., $\mathbb E[\beta_{it}]=\mathbb E[\beta_{i0}]$), then we can recover this effect as the ratio between the DiD estimand for $Y$ net of the identified average direct effect and the DiD estimand for $M$
$$
\mathbb E[\beta_{i0}] = \frac{(\mathbb E[\Delta Y_{it} \vert D_{it}=1]-\mathbb E[\Delta Y_{it} \vert D_{it}=0] )-\mathbb E[\theta_{it}\vert D_{it}=1]}{\mathbb E[\Delta M_{it} \vert D_{it}=1]-\mathbb E[\Delta M_{it} \vert D_{it}=0]},
$$
where $\mathbb E[\theta_{it}\vert D_{it}=1]$ is identified in \eqref{eq:multiADTT}. Note that we do not require parallel trends for the mediator to identify this quantity.

First, when the mediator has no effect on the outcome (i.e., $\beta_{it}=0$ for all $i$ and $t$), then the unconditional DiD estimand $\mathbb E[\Delta Y_{it} \vert D_{it}=1]-\mathbb E[\Delta Y_{it} \vert D_{it}=0]$ identifies the average direct effect on the treated, $\mathbb E[\theta_{it} \vert D_{it}=1]$. In such a case $\Delta Y_{it}=\theta_{it} D_{it} + \eta_t-\eta_0 + U_{it} -U_{i0}$, and we have
\begin{eqnarray*}
\mathbb E[\Delta Y_{it} \vert D_{it}] &=& \mathbb E[\theta_{it}\vert D_{it}] D_{it} + \eta_t-\eta_0+ \mathbb E[U_{it} - U_{i0} \vert D_{it}],\\
&=& \mathbb E[\theta_{it}\vert D_{it}] D_{it} + \eta_t-\eta_0+ \mathbb E[U_{it} - U_{i0}],\ \text{ under Assumption \ref{ass:multiplenonstaggered}},\\
&=& \theta_t D_{it} + \tilde{\eta}_t.
\end{eqnarray*}
One can run a cross-sectional regression for each period $t\in \{1,\ldots,T\},$
\begin{eqnarray}
\Delta Y_{it} &=& \theta_t D_{it} + \tilde{\eta}_t + e_{it}.
\end{eqnarray}
One can alternatively run the saturated regression to obtain the whole parameter vector $(\theta_1,,\ldots,\theta_T)$ with a joint covariance matrix,
\begin{eqnarray}
\Delta Y_{it} = \sum_{s=1}^T \tilde{\eta}_s \mathbbm{1}\{t=s\}+ \sum_{s=1}^T \theta_s D_{it}\mathbbm{1}\{t=s\} + e_{it}.
\end{eqnarray}

Second, when $\psi_t(D_{it},\gamma_i,\tilde{V}_{it})=\alpha_{it} D_{it} + \gamma_i+\delta_t + V_{it}$, then the mediator bias simplifies to
$$\mathbb E[\beta_{it}]\,\mathbb E[\alpha_{it}]
+\mathbb E[\beta_{it}-\beta_{i0}]
\big(\mathbb E[\gamma_i\mid D_{it}=1]-\mathbb E[\gamma_i\mid D_{it}=0]\big).$$
This mediator bias becomes the average indirect effect when the mediator effect $\beta_{it}$ is on average stable over time (i.e., $\mathbb E[\beta_{it}]=\mathbb E[\beta_{i0}]$) or the treatment selection variable, $D_{it}$, is independent of the mediator fixed effects. In such a case, the standard parallel trends assumption holds for the outcome and the mediator (because of the linear two-way fixed effects specification). 

\begin{remark}[Multiple pre-treatment periods]
    When there are multiple pre-treatment periods, we can use the fact that the identification of $\mathbb E[\theta_{it}\vert D_{it}=1]$ does not depend on a specific pre-treatment period to check for violations of our identifying assumptions. More precisely, suppose that there are $T_0$ multiple pre-treatment periods. For $t_0\in \{-T_0,\ldots, 0\}$, define $\Delta Y_{it}^{t_0}\equiv Y_{it}-Y_{it_0}$. Since no individuals were treated in the pre-treatment periods, we have $D_{it_0}=0$ for all $t_0\in \{-T_0,\ldots, 0\}$. Suppose that model~\eqref{eq:multiplenonstaggered} and Assumption \ref{ass:multiplenonstaggered} hold for $t\in \mathcal T=\{-T_0, \ldots,0,\ldots,T\}.$ Then, 
    \begin{eqnarray*}
    \mathbb E[\theta_{it}\vert D_{it}=1] 
    &=&\int \mathbb E[\Delta Y_{it}^{t_0} \vert D_{it}=1, M_{it}=m_t, M_{it_0}=m_{t_0}]\\
    && \qquad \qquad - \mathbb E[\Delta Y_{it}^{t_0} \vert D_{it}=0, M_{it}=m_t, M_{it_0}=m_{t_0}] dF_{M_{it},M_{it_0}\vert D_{it}=1}(m_t,m_{t_0}). \nonumber
\end{eqnarray*}
for all  $t_0\in \{-T_0,\ldots, 0\}$. So, we can test for whether the above estimand is constant across values of $t_0$ for each $t\in \{1,\ldots,T\}$.  
\end{remark}

\subsubsection{Multiple treatment periods with staggered adoption and dynamic treatment effects}\label{sec:staggered-adoption}
We modify the model specification \eqref{eq:multiplenonstaggered} to allow the outcome to depend on past or future treatments. The staggered adoption model we consider is the following. For $t\in \mathcal T=\{0,\ldots,T\}$,
\begin{eqnarray}\label{eq:multiplestaggered}
\left\{ \begin{array}{lcl}
     Y_{it} &=&  (\sum_{g=1}^T\theta_{it}^g D_i^g)\mathbbm{1}\{t>0\} +\beta_{it} M_{it} + \lambda_i +\eta_t +U_{it},\\ \\
     M_{it} &=& (\sum_{g=1}^T\alpha_{it}^g D_i^g)\mathbbm{1}\{t>0\}+\gamma_i + \delta_t + V_{it},\\ \\
     D_{it} &=& \mathbbm{1}\{h(\lambda_i,\gamma_i, \varepsilon_i)\leq t,t>0\},
     \end{array} \right.
\end{eqnarray}
where $D_i^g$ is the indicator for whether individual $i$ receives treatment for the first time in period $g$ or not, and $\theta_{it}^g$ is the direct effect of individual $i$ receiving treatment for the first time in period $g$ on her outcome in period $t$.
Let $\mathcal G \equiv \{1, \ldots, T, \infty\}$.
\begin{assumption}\label{ass:multiplestaggered}$\,$
    \begin{enumerate}[(i)]
        \item  $(\lambda_i,\gamma_i,\varepsilon_i)\ \indep\ (\{V_{it},U_{it},\beta_{it},\{\alpha_{it}^g: g\in \mathcal T\}\}_{t\in \mathcal T})$.
   \item  $(\{V_{it},\{\alpha_{it}^g: g\in \mathcal T\}\}_{t\in \mathcal T})\ \indep\ (\{U_{it},\beta_{it}\}_{t\in \mathcal T}) \vert \lambda_i, \gamma_i,\varepsilon_i$.
   \item $0<\mathbb P(D_{i}^g=1\vert M_{i0},M_{it})<1$ a.s. for all $g \in \mathcal G$.
    \end{enumerate}
\end{assumption}
Assumption \ref{ass:multiplestaggered} implies  $(\{V_{it},\{\alpha_{it}^g: g\in \mathcal T\}\}_{t\in \mathcal T},\lambda_i, \gamma_i,\varepsilon_i)\ \indep\ (\{U_{it}, \beta_{it}\}_{t\in \mathcal T})$.
We can show that under this assumption, the conditional average direct treatment effect on the treated is identified:
\begin{eqnarray*}
    \mathbb E[\theta_{it}^g\vert D_i^g=1, M_{it}=m_t, M_{i0}=m_0]&=&\mathbb E[\Delta Y_{it} \vert D_i^g=1, M_{it}=m_t, M_{i0}=m_0]\\
    &&- \mathbb E[\Delta Y_{it} \vert D_i^{\infty}=1, M_{it}=m_t, M_{i0}=m_0],
\end{eqnarray*}
where $D_i^{\infty}$ an indicator for the never-treated group. We can obtain the average direct effect on the treated as
\begin{eqnarray*}
    \mathbb E[\theta_{it}^g\vert D_{i}^g=1] 
    &=&\int \mathbb E[\Delta Y_{it} \vert D_{i}^g=1, M_{it}=m_t, M_{i0}=m_0]\\
    && \qquad \qquad - \mathbb E[\Delta Y_{it} \vert D_{i}^{\infty}=1, M_{it}=m_t, M_{i0}=m_0] dF_{M_{it},M_{i0}\vert D_{i}^g=1}(m_t,m_0). \nonumber
\end{eqnarray*}

\section{Empirical illustration: railroads and market access}\label{sec:empirical}

The mediator-path identification result writes the $ADTT$ as a conditional DiD estimand averaged over the mediator-path distribution of treated units. We implement this estimand using the doubly robust estimator of \cite{sant2020doubly}. To keep the application aligned with the notation used throughout the paper, $Y$ denotes the outcome, $M$ the mediator, $D$ the treatment, and $\boldsymbol X$ the baseline covariates throughout this section. We reserve $\boldsymbol W$ for the target-specific conditioning vector used in a given doubly robust specification.

\subsection{Questions, sample, and empirical design}

Donaldson and Hornbeck's principal object is the aggregate contribution of the national railroad network to the agricultural sector in 1890 \citep{donaldson2016railroads}. They first estimate how changes in county market access between 1870 and 1890 are capitalized into agricultural land values. They then calculate how much each county's market access would fall if all railroads were removed from the 1890 transportation network and aggregate the implied land-value losses. Their headline conclusion is that removing all railroads would reduce the total value of U.S.\ agricultural land by $60.2\%$, with annual losses equal to $3.22\%$ of GNP.

In their framework, market access is constructed to summarize the direct and indirect spatial general-equilibrium effects of changes in the national railroad network. Their use of the terms ``direct'' and ``indirect'' therefore does not correspond to the natural direct and indirect treatment effects studied in this paper. Likewise, the local-railroad controls in their Table II assess whether the estimated relationship between market access and land value is robust to conditioning on own and nearby railroad construction; those controls are not used to construct a causal mediation decomposition.

Our illustration instead asks a local-adoption question. Let $t=0$ denote 1870 and $t=1$ denote 1890. For county $o$, let $Y_{ot}$ denote the log value of agricultural land and buildings, let $M_{ot}$ denote log cost-based market access, and define the baseline geographic covariates as
\[
\boldsymbol X_o
=
(\mathrm{lon}_o,\mathrm{lat}_o)',
\]
where $\mathrm{lon}_o$ and $\mathrm{lat}_o$ denote, respectively, the longitude and latitude of the geographic centroid of county $o$.
The treatment indicator $D_o$ equals one if county $o$ had no railroad track in 1870 but had track in 1890, and equals zero if the county had no track in either year. Under the maintained assumptions, we report the total effect of local railroad adoption on $Y$, the mediator response of $M$, the $ADTT$ identified through mediator-path conditioning, and the residual $AITT$.

We begin with the $2{,}327$-county balanced regression sample used in the Donaldson--Hornbeck baseline specification. Restricting the sample to counties without railroad track in 1870 excludes the $1{,}272$ counties that already had track, leaving $1{,}055$ counties observed in both periods. Of these counties, $730$ are treated and $325$ are controls. Because $M_{ot}$ is a network object, it can change for control counties when railroad track is constructed elsewhere. By contrast, $D_o$ represents whether the county itself adopts railroad track between the two periods and stays zero for all counties in the baseline period.
We report standard errors using state-clustered CR0 stacked-sandwich covariances and use the delta method where needed. The results without baseline covariates are discussed in Table \ref{tab:estimates_dh_nocov} in the Appendix.

Throughout the application, the average direct effect on the treated, $ADTT$, is the component not transmitted through the mediator, the measured cost-based market-access. It can therefore contain depot-town activity, local commerce, amenities, and omitted dimensions or spillovers of market access. Because each county's outcome and market access evolve with the realized national network, the estimated $ADTT$ is a local-adoption effect in that network, not the no-interference effect of changing the county's track status while holding the rest of the network fixed. 

\subsection{TWFE and doubly robust specifications}

For the original-control TWFE benchmark, define the separate cubic geographic basis
\[
\boldsymbol b_{\mathrm{geo}}(\boldsymbol X_o)
=
\big(
\mathrm{lon}_o,\mathrm{lon}_o^2,\mathrm{lon}_o^3,
\mathrm{lat}_o,\mathrm{lat}_o^2,\mathrm{lat}_o^3
\big)'.
\]
For county $o$ in state $\mathfrak s(o)$ and period $t\in\{0,1\}$, we estimate
\begin{equation}\label{eq:twfe-dh}
Y_{ot}
=
\pi_1 tD_o
+\pi_2 M_{ot}
+\phi_o
+\zeta_{\mathfrak s(o),t}
+\boldsymbol b_{\mathrm{geo}}(\boldsymbol X_o)'\boldsymbol\chi_t
+\xi_{ot},
\end{equation}
where $\phi_o$, $\zeta_{\mathfrak s(o),t}$, and $\boldsymbol b_{\mathrm{geo}}(\boldsymbol X_o)'\boldsymbol\chi_t$ represent county fixed effects, state-by-period fixed effects, and the cubic geographic trends used in Equation (13) of \cite{donaldson2016railroads}, and $tD_o$ denotes local railroad adoption.

In Table \ref{tab:estimates_dh}, the TWFE total effects are obtained by omitting $M_{ot}$ from \eqref{eq:twfe-dh}, and the direct effect and the $M\to Y$ effect are estimates of $\pi_1$ and $\pi_2$, respectively. The TWFE indirect effects are estimated as the total coefficient minus $\pi_1$. Recall that the direct and indirect effects obtained from the TWFE regressions are descriptive, not causal, by Claim \ref{claim:interpretTWFE}.

The two TWFE benchmarks use the same $1{,}055$ counties and equal observation weights, but they differ in their covariate selection. The original-control column contains state-by-period effects and the cubic geographic basis in \eqref{eq:twfe-dh}, whereas the same-controls column replaces them with common period effects and linear trends in $\boldsymbol X_o$, matching the sample, weights, and baseline covariates in the DR specifications. Comparing that column with DR therefore holds the observed covariate set fixed and therefore helps separate differences due to the estimator from differences due to the richer original-control specification.

The doubly robust columns implement the improved estimator of \cite{sant2020doubly}. 
For the total effect and the mediator response, the conditioning vector is
\[
\boldsymbol W_o=\boldsymbol X_o.
\]
For the direct effect, the linear mediator-path specification uses
\[
\boldsymbol W_o
=
(M_{o0},M_{o1},\boldsymbol X_o')',
\]
whereas the additive-quadratic mediator-path specification uses
\[
\boldsymbol W_o
=
(M_{o0},M_{o0}^2,M_{o1},M_{o1}^2,\boldsymbol X_o')'.
\]
\footnote{``Quadratic'' in Table \ref{tab:estimates_dh} refers to separate squared terms in the two mediator coordinates and does not include the interaction $M_{o0}M_{o1}$.}

The DR specification cannot retain the original state and cubic controls. Eight of the $37$ represented states contain treated counties but no controls, so treated-support overlap fails within those states; one additional state contains controls only. We therefore define the same-controls TWFE column with the linear geographic pair $\boldsymbol X_o$, which is the covariate set used in the DR fits.
}

\input{DH_output/tab_estimates_dh}

\subsection{Average effects}

As discussed above, Table \ref{tab:estimates_dh} shows contrasts in direct and indirect effects between TWFE approaches and DR specifications under our assumptions, while the total effect estimates are stable across designs when the covariate set is held fixed.
In particular, the direct effect estimate is statistically indistinguishable from zero under the TWFE specification with linear geographic controls but is $0.429$ $(0.177)$ under our preferred linear path DR specification.
Also, we notice the sign reversal in indirect effect estimates from $0.258$ $(0.118)$ to $-0.039$ $(0.141)$, yet the latter is not statistically significant.
Under the additive-quadratic path model, the DR estimates remain qualitatively similar to the preferred linear path DR estimates, but the direct estimate loses precision.
The overlap diagnostics suggest treating this as a sensitivity analysis, not as a second preferred specification.

\section{Summary and Discussion}\label{Conclusion}

We study DiD designs where treatment changes a time-varying mediator and the mediator also affects the outcome. In such settings, policymakers may want to know the direct effect of the policy on an outcome and how much of the policy effect operates through the mediator. However, a challenge is that the mediator is endogenous and affected by the treatment, and the conventional DiD estimand may not have a clear causal interpretation. Hence, neither ignoring the mediator nor adding it as a time-varying control in a TWFE regression generally helps identify the desired policy effects.

Instead, Proposition \ref{prop:direffect} identifies $CADTT(m_0,m_1)$ by comparing outcome changes across treatment groups at a common observed mediator path, and integrating this contrast over treated paths yields $ADTT$. The result does not require parallel trends conditional on the factual mediator path under our framework, and Claim \ref{claim:interpretTWFE} explains why including $M_{it}$ in TWFE does not produce the same object. Under the addition of Assumption \ref{ass:stablebeta}, the unconditional outcome and mediator DiD moments also identify the indirect effect and the average mediator coefficient when the mediator response is nonzero.
One lesson for applied work, not only practical but also important, is that a post-treatment mediator should not be treated as an ordinary control in a DiD regression. Conditioning on the mediator path can be useful, but the target parameter and the assumptions differ from those underlying standard covariate-adjusted DiD.

This is not a mere claim but is highlighted in the empirical illustrations that we consider. In the railroad application, the linear DR path model estimates $ADTT=0.429$ and $AITT=-0.039$, with the latter imprecisely estimated. On the other hand, the same-controls TWFE claims that the direct effect is $0.090$ and cannot be statistically distinguished from zero. The second application, Alcohol Prohibition and Farm Productivity, which is reported in the Appendix, similarly shows the difference between the standard TWFE framework and our approach in the models with covariates.

However, there are several limitations pointing to useful directions for future work. First, the identifying assumptions considered here restrict the dependence between mediator shocks and outcome shocks after conditioning on the unobservables that determine treatment selection. Also, the identification of the indirect effect relies on the stability of the average mediator effect, and relaxing this restriction would require additional sources of variation or partial identification arguments.

\bibliographystyle{jpe}
\bibliography{mybib}

\clearpage
\appendix

\section{Proofs of Main Results}

\subsection{Proof of Lemma \ref{lem:uncond_indep}}\label{pf_lem:uncond_indep}

\begin{proof}
Assumptions \ref{ass1p} and \ref{ass2p} imply
\begin{eqnarray*}
\left\{ \begin{array}{lcl}
     (U_{i0}, U_{i1}, \beta_{i0}, \beta_{i1})\ \indep\ (V_{i0}, V_{i1}, \alpha_i),\\ \\
    (U_{i0}, U_{i1}, \beta_{i0}, \beta_{i1})\ \indep\ (\lambda_i,\gamma_i,\varepsilon_i),\\ \\ 
    (\lambda_i, \gamma_i,\varepsilon_i)\ \indep\ (V_{i0}, V_{i1}, \alpha_i),
     \end{array} \right.
\end{eqnarray*}
Therefore, $(U_{i0}, U_{i1}, \beta_{i0}, \beta_{i1})\ \indep\ (V_{i0}, V_{i1}, \alpha_i, \lambda_i, \gamma_i,\varepsilon_i)$. In fact,
\begin{eqnarray*}
        F_{(V_{i0},V_{i1}, \alpha_i,\varepsilon_i,\lambda_i, \gamma_i,U_{i0},U_{i1},\beta_{i0},\beta_{i1})} &=& F_{(V_{i0},V_{i1}, \alpha_i,U_{i0},U_{i1},\beta_{i0},\beta_{i1}) \vert (\lambda_i,\gamma_i,\varepsilon_i)} F_{(\lambda_i,\gamma_i,\varepsilon_i)},\\
        &=& F_{(V_{i0},V_{i1}, \alpha_i) \vert (\lambda_i,\gamma_i,\varepsilon_i)} F_{(U_{i0},U_{i1},\beta_{i0},\beta_{i1}) \vert (\lambda_i,\gamma_i,\varepsilon_i)} F_{(\lambda_i,\gamma_i,\varepsilon_i)}\ \text{ under Assumption \ref{ass2p}},\\
        &=& F_{(V_{i0},V_{i1}, \alpha_i) \vert (\lambda_i,\gamma_i,\varepsilon_i)} F_{(U_{i0},U_{i1},\beta_{i0},\beta_{i1})} F_{(\lambda_i,\gamma_i,\varepsilon_i)}\ \text{ under Assumption \ref{ass1p}},\\
       &=& F_{(V_{i0},V_{i1}, \alpha_i, \lambda_i,\gamma_i,\varepsilon_i)} F_{(U_{i0},U_{i1},\beta_{i0},\beta_{i1})}. 
    \end{eqnarray*}
\end{proof}

\subsection{Proof of Claim \ref{claim:interpretTWFE}} \label{pf_claim:interpretTWFE}

\begin{proof}
\small

    Define 
    \begin{align*}
      \widetilde{D}_{i}	\equiv & \ D_{i}-\Pi(D_{i}\mid\Delta M_{i})
	= D_{i}-\mathbb{E}(D_{i})-\frac{Cov(D_{i},\Delta M_{i})}{Var(\Delta M_{i})}\left(\Delta M_i-\mathbb{E}(\Delta M_{i})\right) \\
    \widetilde{\Delta M_i}	\equiv & \ \Delta M_i-\Pi(\Delta M_i\mid D_i)
	= \Delta M_i-\mathbb{E}(\Delta M_i)-\frac{Cov(D_{i},\Delta M_{i})}{Var(D_i)}\left(D_i-\mathbb{E}(D_i)\right)
   \end{align*} 
   Since we only have two periods, the fixed effect estimand and the first difference estimand are numerically identical. Therefore, by Frisch–Waugh–Lovell theorem,

   \begin{align*}
       \pi_{1}=&\frac{Cov(\widetilde{D}_{i},\Delta Y_{i})}{Var(\widetilde{D}_{i})}\\
       = & \frac{1}{Var(\widetilde{D}_{i})}\Bigg\{ Cov(D_{i},\theta_{i}D_{i})+Cov(D_{i},\beta_{i1}\Delta M_{i})+Cov\left(D_{i},(\beta_{i1}-\beta_{i0})M_{i0}\right)\\
       &  - \frac{Cov(D_{i},\Delta M_{i})}{Var(\Delta M_{i})}\left[Cov(\Delta M_{i},\theta_{i}D_{i})+Cov(\Delta M_{i},\beta_{i1}\Delta M_{i})+Cov\left(\Delta M_{i},(\beta_{i1}-\beta_{i0})M_{i0}\right)\right]\Bigg\}\\
       = & \frac{1}{Var(\widetilde{D}_{i})}\Bigg\{ Cov(D_{i},\theta_{i}D_{i})+\mathbb{E}(\alpha_{i})\mathbb{E}(\beta_{i1})Var(D_{i})+\mathbb{E}(\beta_{i1}-\beta_{i0})Cov(D_{i},\gamma_{i})\\
       & -\frac{\mathbb{E}(\alpha_{i})Var(D_{i})}{Var(\Delta M_{i})}\left[Cov(\Delta M_{i},\theta_{i}D_{i})+\mathbb{E}(\beta_{i1})Var(\Delta M_{i})+\mathbb{E}(\beta_{i1}-\beta_{i0})Cov(\Delta M_{i},M_{i0})\right]\Bigg\}\\
       =& \frac{1}{Var(\widetilde{D}_{i})}\Bigg\{ Cov(D_{i},\theta_{i}D_{i})+\mathbb{E}(\beta_{i1}-\beta_{i0})Cov(D_{i},\gamma_{i})\\
       & -\frac{\mathbb{E}(\alpha_{i})Var(D_{i})}{Var(\Delta M_{i})}\left[Cov(\Delta M_{i},\theta_{i}D_{i})+\mathbb{E}(\beta_{i1}-\beta_{i0})Cov(\Delta M_{i},M_{i0})\right]\\
       & -\frac{\mathbb{E}(\alpha_{i})Var(D_{i})}{Var(\Delta M_{i})}\left[\mathbb{E}(\alpha_{i})Cov(D_{i},\theta_{i}D_{i})-\mathbb{E}(\alpha_{i})Cov(D_{i},\theta_{i}D_{i})\right]\Bigg\}\\
       = & \frac{1}{Var(\widetilde{D}_{i})}\Bigg\{ Cov(D_{i},\theta_{i}D_{i})\left(1-\frac{\mathbb{E}(\alpha_{i})^{2}Var(D_{i})}{Var(\Delta M_{i})}\right)\\
       &+\mathbb{E}(\beta_{i1}-\beta_{i0})\left(Cov(D_{i},\gamma_{i})-\frac{\mathbb{E}(\alpha_{i})Var(D_{i})}{Var(\Delta M_{i})}Cov(\Delta M_{i},M_{i0})\right)\\
       &-\frac{\mathbb{E}(\alpha_{i})Var(D_{i})}{Var(\Delta M_{i})}\left[Cov(\Delta M_{i},\theta_{i}D_{i})-\mathbb{E}(\alpha_{i})Cov(D_{i},\theta_{i}D_{i})\right]\Bigg\}\\
       = &\frac{Cov(D_{i},\theta_{i}D_{i})}{Var(D_i)}+\mathbb{E}(\beta_{i1}-\beta_{i0})\left(\frac{Cov(D_{i},\gamma_{i})}{Var(\widetilde{D}_{i})}-\frac{\mathbb{E}(\alpha_{i})Cov(\Delta M_{i},M_{i0})}{Var(\widetilde{\Delta M_{i}})}\right)\\
       &+\frac{\mathbb{E}(\alpha_{i})}{Var(\widetilde{\Delta M_{i}})}\left(\mathbb{E}(\alpha_{i})Cov(D_{i},\theta_{i}D_{i})-Cov(\Delta M_{i},\theta_{i}D_{i})\right)\\
       = & \mathbb{E}(\theta_{i}\mid D_{i}=1)+\mathbb{E}(\beta_{i1}-\beta_{i0})\left(\frac{Cov(D_{i},\gamma_{i})}{Var(\widetilde{D}_{i})}-\frac{\mathbb{E}(\alpha_{i})Cov(\Delta M_{i},M_{i0})}{Var(\widetilde{\Delta M_{i}})}\right)\\
       &+\frac{\mathbb{E}(\alpha_{i})}{Var(\widetilde{\Delta M_{i}})}\left(\mathbb{E}(\alpha_{i})Cov(D_{i},\theta_{i}D_{i})-Cov(\Delta M_{i},\theta_{i}D_{i})\right)
   \end{align*}

The second equality follows from model \ref{seq1} and Lemma \ref{lem:uncond_indep}, and the third equality follows from the following observations:

$Cov\left(D_{i},\beta_{i1}\Delta M_{i}\right)=\mathbb{E}(\alpha_{i})\mathbb{E}(\beta_{i1})Var(D_{i})$

$Cov\left(D_{i},(\beta_{i1}-\beta_{i0})M_{i0}\right)=\mathbb{E}(\beta_{i1}-\beta_{i0})Cov(D_{i},\gamma_{i})$

$Cov(D_{i},\Delta M_{i})=\mathbb{E}(\alpha_{i})Var(D_{i})$

$Cov\left(\Delta M_{i},\beta_{i1}\Delta M_{i}\right)=\mathbb{E}(\beta_{i1})Var(\Delta M_{i})$

$Cov\left(\Delta M_{i},(\beta_{i1}-\beta_{i0})M_{i0}\right)=\mathbb{E}(\beta_{i1}-\beta_{i0})Cov(\Delta M_{i},M_{i0})$

For the fourth equality, the term $\mathbb{E}(\beta_{i1})Var(\Delta M_{i})$ cancels out, and we add and subtract $\mathbb{E}(\alpha_{i})Cov(D_{i},\theta_{i}D_{i})$. The fifth equality then follows by reorganizing and factoring terms. The sixth equality holds by $Var(\tilde{D}_{i})=Var(D_{i})-\frac{\mathbb{E}(\alpha_{i})^{2}Var(D_{i})^{2}}{Var(\Delta M_{i})}$ and $\frac{Var(D_{i})}{Var(\tilde{D}_{i})}=\frac{Var(\Delta M_{i})}{Var(\widetilde{\Delta M_{i}})}$. The last one holds by $\frac{Cov(D_i,\theta_i D_i)}{Var(D_i)}=\mathbb{E}(\theta_i \mid D_i=1)$

Analogously, 
\begin{align*}
    \pi_2 = & \frac{Cov(\widetilde{\Delta M_{i}},\Delta Y_{i})}{Var(\widetilde{\Delta M_{i}})} \\
    = & \frac{1}{Var(\widetilde{\Delta M_{i}})}\Bigg\{ Cov(\Delta M_{i},\theta_{i}D_{i})+Cov(\Delta M_{i},(\beta_{i1}-\beta_{i0})M_{i0})+Cov(\Delta M_{i},\beta_{i1}\Delta M_{i})\\
    & -\frac{Cov(\Delta M_{i},D_{i})}{Var(D_{i})}\left[Cov(D_{i},\theta_{i}D_{i})+Cov(D_{i},(\beta_{i1}-\beta_{i0})M_{i0})+Cov(D_{i},\beta_{i1}\Delta M_{i})\right]\Bigg\}\\
    = & \frac{1}{Var(\widetilde{\Delta M_{i}})}\Bigg\{ Cov(\Delta M_{i},\theta_{i}D_{i})+\mathbb{E}(\beta_{i1}-\beta_{i0})Cov(\Delta M_{i},M_{i0})+\mathbb{E}(\beta_{i1})Var(\Delta M_{i})\\
    & -\mathbb{E}(\alpha_{i})\left[Cov(D_{i},\theta_{i}D_{i})+\mathbb{E}(\beta_{i1}-\beta_{i0})Cov(D_{i},\gamma_{i})+\mathbb{E}(\alpha_{i})\mathbb{E}(\beta_{i1})Var(D_{i})\right]\Bigg\} \\
    = & \frac{1}{Var(\widetilde{\Delta M_{i}})}\Bigg\{\mathbb{E}(\beta_{i1})\left[Var(\Delta M_{i})-\mathbb{E}(\alpha_{i})^{2}Var(D_i)\right]+Cov(\Delta M_{i},\theta_{i}D_{i})-\mathbb{E}(\alpha_{i})Cov(D_{i},\theta_{i}D_{i})\\
    & \mathbb{E}(\beta_{i1}-\beta_{i0})\left[Cov(\Delta M_{i},M_{i0})-\mathbb{E}(\alpha_{i})Cov(D_{i},\gamma_{i})\right]\Bigg\}\\
    = & \mathbb{E}(\beta_{i1})+\frac{\mathbb{E}(\beta_{i1}-\beta_{i0})}{Var(\widetilde{\Delta M_{i}})}\left[Cov(\Delta M_{i},M_{i0})-\mathbb{E}(\alpha_{i})Cov(D_{i},\gamma_{i})\right]\\
    & +\frac{1}{Var(\widetilde{\Delta M_{i}})}\big[Cov(\Delta M_{i},\theta_{i}D_{i})-\mathbb{E}(\alpha_{i})Cov(D_{i},\theta_{i}D_{i})\big]
\end{align*}

    The third equality uses $Cov(\Delta M_i,D_i)=\mathbb{E}(\alpha_i)Var(D_i)$. The fourth equality is obtained by collecting terms, and the fifth equality follows from $Var(\widetilde{\Delta M_{i}})=Var(\Delta M_{i})-\mathbb{E}(\alpha_{i})^{2}Var(D_{i})$.
\end{proof}

\subsection{Proof of Proposition \ref{prop:direffect}}\label{pf_prop:direffect}
\begin{proof}
    \begin{eqnarray*}
    && \tau^{DiD}(m_0,m_1)\\ &\equiv& \mathbb E[\Delta Y_i \vert D_i=1, M_{i1}=m_1, M_{i0}=m_0]- \mathbb E[\Delta Y_i \vert D_i=0, M_{i1}=m_1, M_{i0}=m_0],\\
    &=& \mathbb E[\theta_i D_i +\beta_{i1}M_{i1}-\beta_{i0}M_{i0}+\eta_1-\eta_0 + U_{i1}-U_{i0}\vert D_i=1, M_{i1}=m_1, M_{i0}=m_0]\\
    && - \mathbb E[\theta_i D_i +\beta_{i1}M_{i1}-\beta_{i0}M_{i0}+\eta_1-\eta_0 + U_{i1}-U_{i0}\vert D_i=0, M_{i1}=m_1, M_{i0}=m_0],\\
    &=& \mathbb E[\theta_i \vert D_i=1, M_{i1}=m_1, M_{i0}=m_0]\\
    &&+\mathbb E[\beta_{i1}m_{1}-\beta_{i0}m_{0}+\eta_1-\eta_0 + U_{i1}-U_{i0}\vert h(\lambda_i, \gamma_i,\varepsilon_i)=1, M_{i1}(1)=m_1, M_{i0}(1)=m_0]\\
    && - \mathbb E[\beta_{i1}m_{1}-\beta_{i0}m_{0}+\eta_1-\eta_0 + U_{i1}-U_{i0}\vert h(\lambda_i,\gamma_i,\varepsilon_i)=0, M_{i1}(0)=m_1, M_{i0}(0)=m_0],\\
    &=&\mathbb E[\theta_i \vert D_i=1, M_{i1}=m_1, M_{i0}=m_0]\\
    && + \mathbb E[\beta_{i1}m_{1}-\beta_{i0}m_{0}+\eta_1-\eta_0 + U_{i1}-U_{i0}\vert h(\lambda_i, \gamma_i,\varepsilon_i)=1,\\
    && \hspace{5.2cm} \alpha_i+\gamma_i+\delta_1+V_{i1}=m_1, \gamma_i+\delta_0+V_{i0}=m_0]\\
    && -\mathbb E[\beta_{i1}m_{1}-\beta_{i0}m_{0}+\eta_1-\eta_0 + U_{i1}-U_{i0}\vert h(\lambda_i,\gamma_i,\varepsilon_i)=0,\\
    && \hspace{5.2cm} \gamma_i+\delta_1+V_{i1}=m_1, \gamma_i+\delta_0+V_{i0}=m_0],\\
    &=&\mathbb E[\theta_i \vert D_i=1, M_{i1}=m_1, M_{i0}=m_0] + \mathbb E[\beta_{i1}m_{1}-\beta_{i0}m_{0}+\eta_1-\eta_0 + U_{i1}-U_{i0}]\\
    && -\mathbb E[\beta_{i1}m_{1}-\beta_{i0}m_{0}+\eta_1-\eta_0 + U_{i1}-U_{i0}]\\
    && \hspace{2cm}\text{ because } (U_{i0}, U_{i1}, \beta_{i0}, \beta_{i1})\ \indep\ (V_{i0}, V_{i1}, \alpha_i, \lambda_i, \gamma_i,\varepsilon_i),\\
    &=&\mathbb E[\theta_i \vert D_i=1, M_{i1}=m_1, M_{i0}=m_0].
\end{eqnarray*}
\end{proof}

\subsection{Proof of Proposition \ref{prop:medeffect}}
\label{pf_prop:medeffect}
\begin{proof}
    We have shown in the main text that 
    \begin{eqnarray*}
     \tau^{DiD} &=& \mathbb E[\theta_i \vert D_i=1] +\mathbb E[\beta_{i1}] \mathbb E[\alpha_i]\\
     && + \mathbb E[\gamma_i(\beta_{i1}-\beta_{i0})\vert h(\lambda_i,\gamma_i,\varepsilon_i)=1]-\mathbb E[\gamma_i(\beta_{i1}-\beta_{i0})\vert h(\lambda_i,\gamma_i,\varepsilon_i)=0].   
    \end{eqnarray*}
    Using the law of iterated expectations, we can write for $d\in\{0,1\}$
    \begin{eqnarray*}
        \mathbb E[\gamma_i(\beta_{i1}-\beta_{i0})\vert h(\lambda_i,\gamma_i,\varepsilon_i)=d] &=& \mathbb E[\gamma_i\mathbb E[(\beta_{i1}-\beta_{i0})\vert h(\lambda_i,\gamma_i,\varepsilon_i)=d,\gamma_i]\vert h(\lambda_i,\gamma_i,\varepsilon_i)=d],\\
        &=& \mathbb E[\gamma_i\mathbb E[\beta_{i1}-\beta_{i0}]\vert h(\lambda_i,\gamma_i,\varepsilon_i)=d]\\
        && \qquad \qquad \qquad  \text{ under Assumption \ref{ass1p}},\\
        &=& \mathbb E[\beta_{i1}-\beta_{i0}]\mathbb E[\gamma_i\vert h(\lambda_i,\gamma_i,\varepsilon_i)=d],\\
        &=& (\mathbb E[\beta_{i1}]-\mathbb E[\beta_{i0}])\mathbb E[\gamma_i\vert h(\lambda_i,\gamma_i,\varepsilon_i)=d],\\
        &=& 0 \ \text{ under Assumption \ref{ass:stablebeta}}
    \end{eqnarray*}
    Therefore, $\tau^{DiD} = \mathbb E[\theta_i \vert D_i=1] +\mathbb E[\beta_{i1}] \mathbb E[\alpha_i]$. Under Assumption \ref{ass1p}, $\mathbb E[\alpha_i]$ is identified as the DiD estimand for the mediator, $\mathbb E[\alpha_i]=\alpha^{DiD}$ because parallel trends for the mediator holds under our model. Hence,
    \begin{eqnarray*}
        \mathbb E[\beta_{i1}]&=& \frac{\tau^{DiD}-\mathbb E[\theta_i\vert D_i=1]}{\alpha^{DiD}},\\
        &=&\frac{\tau^{DiD}-\int \tau^{DiD}(M_{i1}=m_1, M_{i0}=m_0) dF_{M_{i1},M_{i0}\vert D_i=1}(m_1,m_0)}{\alpha^{DiD}}.
    \end{eqnarray*}
\end{proof}

\subsection{Proof of Proposition \ref{prop:generalresultADTT}}\label{apx:generalresult}
\begin{proof}
    We have 
    \begin{eqnarray*}
        \tau^{DiD}(m_0,m_1) &=& \mathbb E[Y_1-Y_0 \vert D=1,M_1=m_1,M_0=m_0]\\
        && -\ \mathbb E[Y_1-Y_0 \vert D=0,M_1=m_1,M_0=m_0],\\
        &=& \mathbb E[Y_1(1,m_1)-Y_0(1,m_0) \vert D=1,M_1=m_1,M_0=m_0]\\
        && \qquad \qquad -\mathbb E[Y_1(0,m_1)-Y_0(0,m_0) \vert D=0,M_1=m_1,M_0=m_0],\\
        &=& \mathbb E[Y_1(1,m_1)-Y_1(0,m_1) \vert D=1,M_1=m_1,M_0=m_0]\\
        && \qquad \qquad + \mathbb E[Y_1(0,m_1)-Y_0(1,m_0) \vert D=1,M_1=m_1,M_0=m_0]\\
        && \qquad \qquad -\mathbb E[Y_1(0,m_1)-Y_0(0,m_0) \vert D=0,M_1=m_1,M_0=m_0],\\
        &=& \mathbb E[Y_1(1,m_1)-Y_1(0,m_1) \vert D=1,M_1=m_1,M_0=m_0]\\
        && \qquad \qquad + \mathbb E[Y_1(0,m_1)-Y_0(0,m_0) \vert D=1,M_1(1)=m_1,M_0(1)=m_0]\\
        && \hspace{7.6cm} \text{ under Ass. \ref{noanticipation}(\ref{noanticipation1})}\\
        && \qquad \qquad -\mathbb E[Y_1(0,m_1)-Y_0(0,m_0) \vert D=0,M_1(0)=m_1,M_0(0)=m_0],\\
        &=& \mathbb E[Y_1(1,M_1(1))-Y_1(0,M_1(1)) \vert D=1,M_1=m_1,M_0=m_0]\\
        && \qquad \qquad + \mathbb E[Y_1(0,m_1)-Y_0(0,m_0) \vert D=1,M_1(1)=m_1,M_0(1)=m_0] \\
        && \qquad \qquad -\mathbb E[Y_1(0,m_1)-Y_0(0,m_0) \vert D=1,M_1(0)=m_1,M_0(0)=m_0]\\
        && \hspace{7.6cm} \text{ under Ass. \ref{conditionalPT}},\\
        &=& \mathbb E[Y_1(1,M_1(1))-Y_1(0,M_1(1)) \vert D=1,M_1=m_1,M_0=m_0]\\
        && \qquad \qquad + \mathbb E[Y_1(0,m_1)-Y_0(0,m_0) \vert D=1,M_1(0)=m_1,M_0(0)=m_0]\\
        && \hspace{7.6cm} \text{ under Ass. \ref{conditionalorthorganity}}\\
        && \qquad \qquad -\mathbb E[Y_1(0,m_1)-Y_0(0,m_0) \vert D=1,M_1(0)=m_1,M_0(0)=m_0],\\
        &=& \mathbb E[Y_1(1,M_1(1))-Y_1(0,M_1(1)) \vert D=1,M_1=m_1,M_0=m_0]
    \end{eqnarray*}
{
Hence, by the Law of iterated expectations,
\begin{eqnarray*}
    &&\int \tau^{DiD}(m_0,m_1)\, dF_{(M_1, M_0) \mid D=1}(m_1, m_0) \\
    &&\qquad = \mathbb E\bigl[\mathbb E[Y_1(1,M_1(1))-Y_1(0,M_1(1)) \mid D=1,M_1,M_0] \mid D=1\bigr]\\
    &&\qquad = \mathbb E[Y_1(1,M_1(1))-Y_1(0,M_1(1)) \mid D=1]\\
    &&\qquad = ADTT.
\end{eqnarray*}
}
\end{proof}

\subsection{Proof of Proposition \ref{prop:nonlinear}}\label{apx:nonlinear}

\begin{proof} We have
\begin{eqnarray*}
    && \mathbb P(Y_1(0,m_1) \leq y \vert D=d, M_0=m_0, M_1=m_1)\\
    && = \mathbb P(g_1(0,m_1,U_1) \leq y \vert D=d, M_0=m_0, M_1=m_1),\\
    &&= \mathbb P(U_1 \leq g_1^{-1}(0,m_1;y) \vert D=d, M_0=m_0, M_1=m_1)\ \text{ under Assumption \ref{ass:nonlineardid}(\ref{ass2:nonlineardid})},\\
    &&= \mathbb P(U_0 \leq g_1^{-1}(0,m_1;y) \vert D=d, M_0=m_0, M_1=m_1)\ \text{ under Assumption \ref{ass:nonlineardid}(\ref{ass1:nonlineardid})}\\
    &&= \mathbb P(g_0(0,m_0,U_0) \leq g_0(0,m_0,g_1^{-1}(0,m_1;y)) \vert D=d, M_0=m_0, M_1=m_1) \ \text{ under Assumption \ref{ass:nonlineardid}(\ref{ass2:nonlineardid})},\\
    &&= \mathbb P(g_0(0,M_0,U_0) \leq T(m_0,m_1,y) \vert D=d, M_0=m_0, M_1=m_1),\\
    &&= \mathbb P(Y_0(0,M_0) \leq T(m_0,m_1,y) \vert D=d, M_0=m_0, M_1=m_1),\\
    &&= \mathbb P(Y_0 \leq T(m_0,m_1,y) \vert D=d, M_0=m_0, M_1=m_1)
\end{eqnarray*}
   where $T(m_0,m_1,y)\equiv g_0(0,m_0,g_1^{-1}(0,m_1;y))$. 

   For $d=0,$
\begin{eqnarray*}
   F_{Y_1 \vert D=0, M_0=m_0, M_1=m_1}(y) &\equiv& \mathbb P(Y_1 \leq y \vert D=0, M_0=m_0, M_1=m_1),\\
   &=& \mathbb P(Y_1(0,m_1) \leq y \vert D=0, M_0=m_0, M_1=m_1),\\
   &=& F_{Y_0 \vert D=0, M_0=m_0, M_1=m_1}(T(m_0,m_1,y))
\end{eqnarray*}
   Therefore, for $y \in Supp\{Y_1 \vert D=0, M_0=m_0, M_1=m_1\}$, we have $$T(m_0,m_1,y)= F^{-1}_{Y_0 \vert D=0, M_0=m_0, M_1=m_1}(F_{Y_1 \vert D=0, M_0=m_0, M_1=m_1}(y))$$ under Assumption \ref{ass:nonlineardid}(\ref{ass2:nonlineardid}).

   For $d=1$ and $y \in Supp\{Y_1 \vert D=1, M_0=m_0, M_1=m_1\}\subseteq Supp\{Y_1 \vert D=0, M_0=m_0, M_1=m_1\}$ under Assumption \ref{ass:nonlineardid}(\ref{ass3:nonlineardid}), we have
   \begin{eqnarray*}
  && \mathbb P(Y_1(0,m_1) \leq y \vert D=1, M_0=m_0, M_1=m_1) =  \\
       && \qquad \qquad F_{Y_0 \vert D=1, M_0=m_0, M_1=m_1}\left(F^{-1}_{Y_0 \vert D=0, M_0=m_0, M_1=m_1}(F_{Y_1 \vert D=0, M_0=m_0, M_1=m_1}(y))\right).
\end{eqnarray*}
To complete the proof, it is straightforward that 
\begin{eqnarray*}
      \mathbb P(Y_1 \leq y \vert D=1, M_0=m_0, M_1=m_1)  &=& \mathbb P(Y_1(1,m_1) \leq y \vert D=1, M_0=m_0, M_1=m_1). 
    \end{eqnarray*}
\end{proof}

\subsection{Proof of Proposition \ref{prop:directeffect_multiple}}\label{pf_prop:direffectmultiple}
\begin{proof}
We have
    \begin{eqnarray*}
    &&\mathbb E[\Delta Y_{it} \vert D_{it}=1, M_{it}=m_t, M_{i0}=m_0]- \mathbb E[\Delta Y_{it} \vert D_{it}=0, M_{it}=m_t, M_{i0}=m_0],\\
    &=& \mathbb E[\theta_{it} D_{it} +\beta_{it}M_{it}-\beta_{i0}M_{i0}+\eta_t-\eta_0 + U_{it}-U_{i0}\vert D_{it}=1, M_{it}=m_t, M_{i0}=m_0]\\
    && - \mathbb E[\theta_{it} D_{it} +\beta_{it}M_{it}-\beta_{i0}M_{i0}+\eta_t-\eta_0 + U_{it}-U_{i0}\vert D_{it}=0, M_{it}=m_t, M_{i0}=m_0],\\
    &=& \mathbb E[\theta_{it} \vert D_{it}=1, M_{it}=m_t, M_{i0}=m_0]\\
    &&+\mathbb E[\beta_{it}m_{t}-\beta_{i0}m_{0}+\eta_t-\eta_0 + U_{it}-U_{i0}\vert h_t(\lambda_i, \gamma_i,\varepsilon_{it})=1, \psi_t(1,\gamma_i,\tilde{V}_{it})=m_t, \psi_0(0,\gamma_i,\tilde{V}_{i0})=m_0]\\
    && - \mathbb E[\beta_{it}m_{t}-\beta_{i0}m_{0}+\eta_t-\eta_0 + U_{it}-U_{i0}\vert h_t(\lambda_i, \gamma_i,\varepsilon_{it})=0, \psi_t(0,\gamma_i,\tilde{V}_{it})=m_t, \psi_0(0,\gamma_i,\tilde{V}_{i0})=m_0],\\
    &=&\mathbb E[\theta_{it} \vert D_{it}=1, M_{it}=m_t, M_{i0}=m_0]\\
    &&+\mathbb E[\beta_{it}m_{t}-\beta_{i0}m_{0}+\eta_t-\eta_0 + U_{it}-U_{i0}]\\
    && - \mathbb E[\beta_{it}m_{t}-\beta_{i0}m_{0}+\eta_t-\eta_0 + U_{it}-U_{i0}],\\
    && \hspace{2cm}\text{ because } (U_{i0}, U_{it}, \beta_{i0}, \beta_{it})\ \indep\ (\tilde{V}_{i0}, \tilde{V}_{it}, \lambda_i, \gamma_i,\varepsilon_{it}) \text{ under Assumption \ref{ass:multiplenonstaggered}},\\
    &=&\mathbb E[\theta_{it} \vert D_{it}=1, M_{it}=m_t, M_{i0}=m_0].
\end{eqnarray*}
The overlap condition ensures that the estimand is well-defined. Hence, by integrating this estimand with respect to the distribution of the joint vector $(M_{it},M_{i0})$ conditional on treated group $D_{it}=1$, we obtain from the law iterated expectations
\begin{eqnarray*}
    \mathbb E[\theta_{it}\vert D_{it}=1] 
    &=&\int \mathbb E[\Delta Y_{it} \vert D_{it}=1, M_{it}=m_t, M_{i0}=m_0]\\
    && \qquad \qquad - \mathbb E[\Delta Y_{it} \vert D_{it}=0, M_{it}=m_t, M_{i0}=m_0] dF_{M_{it},M_{i0}\vert D_{it}=1}(m_t,m_0). \nonumber
\end{eqnarray*}
\end{proof}

\section{Numerical Simulation}
\subsection{DGP in Figure~\ref{fig:simulation_example}} \label{apx:simulation_example}
    Under our model \eqref{seq1}, we consider the following data generating process. Individual fixed effects and treatment-effect heterogeneity are jointly distributed as
    $$
(\lambda_{i},\gamma_{i},\theta_{i},\alpha_{i})\sim\mathcal{N}\left(\left(\begin{array}{c}
1\\
1\\
0.6\\
1
\end{array}\right),\left(\begin{array}{cccc}
1 & 0.5 & 0.5 & 0\\
0.5 & 1 & 0.5 & 0\\
0.5 & 0.5 & 1 & 0.5\\
0 & 0 & 0.5 & 1
\end{array}\right)\right)
$$
It allows the treatment effect on the outcome, $\theta_i$, to be correlated with fixed effects and the treatment effect on the mediator, $\alpha_i$. Treatment is selected based on fixed effects according to
$D_{i}=\mathds{1}\{\lambda_{i}+\gamma_{i}>\varepsilon_{i}\}$ where $\varepsilon_{i}\sim\mathcal{N}(2,1)$ is an exogenous shock, so treatment is endogenous through unobserved heterogeneity. 

The effects of the mediator on the outcome may vary over time and are allowed to be serially correlated: $(\beta_{i0},\beta_{i1})\sim\mathcal{N}\left(\left(\begin{array}{c}
3\\
1
\end{array}\right),
 \left(\begin{array}{cc}
1 & 0.5\\
0.5 & 1
\end{array}\right)\right)$. 

Idiosyncratic disturbances in the outcome and mediator equations are given by $(U_{i0},U_{i1})\sim\mathcal{N}\left(\left(\begin{array}{c}
0\\
0
\end{array}\right),
\left(\begin{array}{cc}
1 & 0.5\\
0.5 & 1
\end{array}\right)\right), (V_{i0},V_{i1})\sim\mathcal{N}\left(\left(\begin{array}{c}
0\\
0
\end{array}\right),\left(\begin{array}{cc}
1 & 0.5\\
0.5 & 1
\end{array}\right)\right)$. Nonrandom time effects are set to 
$(\eta_{0},\eta_{1})=(0,1)$, $(\delta_{0}$, $\delta_{1})=(0,1)$.

\section{Additional empirical results}\label{apx:empirical}

This Appendix reports a second application, to Alcohol Prohibition in the early twentieth-century United States, and the no-covariate estimates for the railroads application of Section \ref{sec:empirical}.

\subsection{Alcohol Prohibition and farm productivity}\label{apx:prohibition}

\cite{howard2021closing} compare counties adopting Alcohol Prohibition during 1900--1910 with counties adopting after 1910. Their DiD estimates show higher farm productivity and land values in early-adopting counties, and they interpret the land-value response as evidence that Prohibition relaxed borrowing constraints and supported capital investment.

We retain this treatment comparison but set $Y_{it}$ to log farm productivity and $M_{it}$ to log farm value per acre. The full two-period sample has $2{,}342$ counties, with $1{,}316$ treated and $1{,}026$ controls; the complete-covariate sample has $2{,}301$ counties, with $1{,}296$ treated and $1{,}005$ controls. Baseline controls consist of the 1890 share-in-favor decile indicators and log population plus the urban, male, white, foreign, and foreign-born shares. Here the direct effect is the productivity component not transmitted through the observed farm-value path, so labor-efficiency and other unmeasured channels remain in that component.

Table \ref{tab:estimates} compares the descriptive TWFE decomposition with the improved DR estimator of \cite{sant2020doubly}. 
The linear DR family uses the original baseline-control span for all three targets and adds $(M_{i0},M_{i1})$ only to the direct-effect fit. The additive-quadratic family adds squares of the six continuous controls to all three fits and $(M_{i0}^2,M_{i1}^2)$ to the direct-effect nuisance model, but no $M_{i0}M_{i1}$ interaction. Standard errors come from a joint county-clustered CR0 stacked-sandwich covariance without a finite-cluster degrees-of-freedom adjustment. The indirect effect and mediator-effect ratio use the delta method.

\input{HO_output/tab_estimates_ho}

\paragraph{Average effects.}
Column (1) of Table \ref{tab:estimates} gives the descriptive TWFE decomposition on the complete-covariate sample. The total productivity coefficient is $0.094$ $(0.038)$, the mediator-adjusted Prohibition coefficient is $0.065$ $(0.038)$, and the mediator response is $0.057$ $(0.024)$. These values reproduce the positive land-value channel in the regression decomposition, but Claim \ref{claim:interpretTWFE} rules out assigning causal direct and indirect interpretations without the paper's identifying restrictions.

Under those restrictions, the linear DR fit gives $ATT=0.133$ $(0.041)$, $ADTT=0.103$ $(0.046)$, and a mediator response of $0.135$ $(0.028)$. The implied $AITT$ is $0.030$ $(0.019)$ and the mediator-effect ratio is $0.218$ $(0.121)$. The $ADTT$ differs from zero at the 5\% level, the $AITT$ does not, and the ratio differs from zero only at the 10\% level. Adding the quadratic terms changes the $ADTT$ by only $0.003$, to $0.106$ $(0.047)$, but reduces the $AITT$ to $0.009$ $(0.020)$ and the ratio to $0.063$ $(0.135)$. The direct estimate is therefore insensitive to these added terms, whereas evidence for mediation through observed farm value is not.

\clearpage
\input{HO_output/tab_estimates_ho_complete}

On the other hand, Table \ref{tab:estimates_complete} provides the whole picture of the estimates across different specifications along with the use of covariates. Without baseline controls, the TWFE approach gives a total-effect estimate of $0.009$ $(0.033)$ and a mediator response of $-0.115$ $(0.026)$. The linear and additive-quadratic DR direct estimates are $0.080$ $(0.027)$ and $0.150$ $(0.028)$, with the indirect effects calculated as $-0.071$ $(0.017)$ and $-0.141$ $(0.023)$. With baseline controls, the total and mediator-response estimates are positive, and the indirect estimates are close to zero.

\subsection{Railroads application without baseline covariates}\label{apx:dh-nocov}

We repeat the mediation analysis of the local-adoption analysis with no baseline covariates and summarize the results in Table \ref{tab:estimates_dh_nocov}. Note that the total effects and the effects on the mediator are obtained as unadjusted DiD estimands of $0.520$ for the outcome and $0.756$ for the mediator across all columns. Without the baseline covariates, we obtain $ADTT=0.547$, $AITT=-0.027$, and a mediator-effect ratio of $-0.036$ with the linear mediator path, and $ADTT=0.599$, $AITT = -0.079$, and a mediator-effect ratio of $-0.105$ with the additive quadratic path.

In the no-covariate TWFE regression, in contrast, the direct effect estimate is obtained as $0.086$, and the claimed indirect effect estimate is $0.434$. It has to be noted that Claim \ref{claim:interpretTWFE} still limits the TWFE estimates to a descriptive decomposition.

\input{DH_output/tab_estimates_dh_nocov}

\end{document}

%% file: DH_output/tab_estimates_dh.tex
\begin{table}[!htbp]
\centering
\begin{threeparttable}
\footnotesize
\caption{Local railroad adoption and market-access mediation, 1870--1890}
\label{tab:estimates_dh}
\renewcommand{\arraystretch}{1.25}
\setlength{\tabcolsep}{3pt}
\begin{tabular}{lcccc}
\toprule
& \multicolumn{2}{c}{TWFE} & \multicolumn{2}{c}{Doubly Robust} \\
\cmidrule(lr){2-3}\cmidrule(lr){4-5}
& Original controls & Same controls & Linear path & Additive quadratic \\
& (1) & (2) & (3) & (4) \\
\midrule
$D \to Y$ (Total) & -0.109 & 0.348\sym{**} & 0.390\sym{**} & 0.390\sym{**} \\
 & (0.087) & (0.147) & (0.158) & (0.158) \\[3pt]
$D \to Y$ (Direct) & -0.253\sym{***} & 0.090 & 0.429\sym{**} & 0.525\sym{*} \\
 & (0.098) & (0.175) & (0.177) & (0.275) \\[3pt]
$D \to M$ & 0.376\sym{***} & 0.656\sym{***} & 0.698\sym{***} & 0.698\sym{***} \\
 & (0.054) & (0.100) & (0.107) & (0.107) \\[3pt]
$M \to Y$ & 0.384\sym{***} & 0.394\sym{**} & -0.055 & -0.193 \\
 & (0.114) & (0.178) & (0.204) & (0.386) \\[3pt]
$D \to M \to Y$ (Indirect) & 0.144\sym{***} & 0.258\sym{**} & -0.039 & -0.135 \\
 & (0.043) & (0.118) & (0.141) & (0.262) \\
\midrule
Covariate adjustment & State + cubic geo. & Linear geo. & Linear geo. & Linear geo. \\
Observations & 2{,}110 & 2{,}110 & 2{,}110 & 2{,}110 \\
\bottomrule
\end{tabular}
\par\footnotesize
\textit{Notes}: Columns (1) and (2) are descriptive TWFE decompositions. Their rows report, in order, the total local-adoption effect coefficient from the outcome regression without $M$, that coefficient after adding $M$, targeting the direct effect, the local-adoption effect coefficient from the mediator regression, the coefficient on $M$ in the outcome regression, and the difference between the first two coefficients.
The specification of Column (1) uses state-by-period effects and separate cubic terms in longitude and latitude. Column (2) uses the same $1{,}055$ counties, equal observation weights, and linear geographic covariates as Column (3). Columns (3) and (4) report DR estimates of $ATT$, $ADTT$, the treatment effect on $M$, the mediator-effect ratio, and $AITT$ under our framework.
The sample consists of $730$ adopters and $325$ controls observed in two periods. Standard errors use a joint state-clustered CR0 stacked-sandwich covariance across $37$ states, and the last two rows use the delta method.
$^{*}$, $^{**}$, and $^{***}$ denote statistical significance at the 10\%, 5\%, and 1\% , respectively.
\end{threeparttable}
\end{table}

%% file: HO_output/tab_estimates_ho.tex
\begin{table}[!htbp]
\centering
\begin{threeparttable}
\caption{Total, direct, and indirect treatment effects: Prohibition}
\label{tab:estimates}
\renewcommand{\arraystretch}{1.3}
\setlength{\tabcolsep}{9pt}
\begin{tabular}{lccc}
\toprule
& \multirow{2}{*}{TWFE} & \multicolumn{2}{c}{Doubly robust} \\
\cmidrule(lr){3-4}
& & Linear path & Additive quadratic \\
& (1) & (2) & (3) \\
\midrule
$D \to Y$ (Total) & 0.094\sym{**} & 0.133\sym{***} & 0.115\sym{***} \\
 & (0.038) & (0.041) & (0.041) \\[3pt]
$D \to Y$ (Direct) & 0.065\sym{*} & 0.103\sym{**} & 0.106\sym{**} \\
 & (0.038) & (0.046) & (0.047) \\[3pt]
$D \to M$ & 0.057\sym{**} & 0.135\sym{***} & 0.141\sym{***} \\
 & (0.024) & (0.028) & (0.027) \\[3pt]
$M \to Y$ (Ratio) & 0.498\sym{***} & 0.218\sym{*} & 0.063 \\
 & (0.081) & (0.121) & (0.135) \\[3pt]
$D \to M \to Y$ (Indirect) & 0.028\sym{**} & 0.030 & 0.009 \\
 & (0.012) & (0.019) & (0.020) \\
\midrule
Baseline controls & Original span & Original span & Original + squares \\
Observations & 4{,}602 & 4{,}602 & 4{,}602 \\
\bottomrule
\end{tabular}
\par\footnotesize
\textit{Notes}: Column (1) is the descriptive TWFE decomposition. Columns (2) and (3) report DR estimates of $ATT$, $ADTT$, the treatment effect on $M$, the mediator-effect ratio, and $AITT$ under our framework. The outcome is the logarithm of farm productivity, the mediator is the logarithm of farm value per acre, and all columns use the complete-covariate sample. The baseline span contains the 1890 share-in-favor decile indicators and six continuous 1890 controls. Column (3) adds squares of those six controls, and the direct-effect specification also contains $M_{i0}^2$ and $M_{i1}^2$ without $M_{i0}M_{i1}$.
Standard errors are obtained using a joint county-clustered CR0 stacked-sandwich covariance across $2{,}301$ counties, and the last two rows use the delta method.
$^{*}$, $^{**}$, and $^{***}$ denote significance at the 10\%, 5\%, and 1\% , respectively.
\end{threeparttable}
\end{table}

%% file: HO_output/tab_estimates_ho_complete.tex
\begin{landscape}
\begin{table}[!p]
\centering
\begin{threeparttable}
\caption{Total, direct, and indirect treatment effects: Prohibition, with and without baseline controls}
\label{tab:estimates_complete}
\setlength{\tabcolsep}{7pt}
\renewcommand{\arraystretch}{1.4}
\begin{tabular}{lcccccc}
\toprule
& \multicolumn{2}{c}{TWFE} & \multicolumn{2}{c}{DR: linear path} & \multicolumn{2}{c}{DR: additive quadratic} \\
\cmidrule(lr){2-3}\cmidrule(lr){4-5}\cmidrule(lr){6-7}
& No controls & Controls & No controls & Controls & No controls & Controls \\
& (1) & (2) & (3) & (4) & (5) & (6) \\
\midrule
$D \to Y$ (Total) & 0.009 & 0.094\sym{**} & 0.009 & 0.133\sym{***} & 0.009 & 0.115\sym{***} \\
 & (0.033) & (0.038) & (0.033) & (0.041) & (0.033) & (0.041) \\[3pt]
$D \to Y$ (Direct) & 0.079\sym{***} & 0.065\sym{*} & 0.080\sym{***} & 0.103\sym{**} & 0.150\sym{***} & 0.106\sym{**} \\
 & (0.029) & (0.038) & (0.027) & (0.046) & (0.028) & (0.047) \\[3pt]
$D \to M$ & -0.115\sym{***} & 0.057\sym{**} & -0.115\sym{***} & 0.135\sym{***} & -0.115\sym{***} & 0.141\sym{***} \\
 & (0.026) & (0.024) & (0.026) & (0.028) & (0.026) & (0.027) \\[3pt]
$M \to Y$ (Ratio) & 0.601\sym{***} & 0.498\sym{***} & 0.616\sym{***} & 0.218\sym{*} & 1.222\sym{***} & 0.063 \\
 & (0.053) & (0.081) & (0.053) & (0.121) & (0.209) & (0.135) \\[3pt]
$D \to M \to Y$ (Indirect) & -0.069\sym{***} & 0.028\sym{**} & -0.071\sym{***} & 0.030 & -0.141\sym{***} & 0.009 \\
 & (0.018) & (0.012) & (0.017) & (0.019) & (0.023) & (0.020) \\
\midrule
Observations & 4{,}684 & 4{,}602 & 4{,}684 & 4{,}602 & 4{,}684 & 4{,}602 \\
\bottomrule
\end{tabular}
\par\footnotesize
\textit{Notes}: Odd-numbered columns omit baseline controls, and even-numbered columns include them. The TWFE columns are descriptive decompositions, and the DR columns report DR estimates of $ATT$, $ADTT$, the treatment effect on $M$, the mediator-effect ratio, and $AITT$ under our framework.
The linear direct-effect specifications condition on $(M_{i0},M_{i1})$, and the additive-quadratic models additionally condition on $(M_{i0}^2,M_{i1}^2)$ without their interaction.
Standard errors are obtained using a joint county-clustered CR0 stacked-sandwich covariance, with $2{,}342$ county clusters in the uncontrolled columns and $2{,}301$ in the controlled columns. The last two rows use the delta method.
\end{threeparttable}
\end{table}
\end{landscape}

%% file: DH_output/tab_estimates_dh_nocov.tex
\begin{table}[!htbp]
\centering
\begin{threeparttable}
\small
\caption{Local railroad adoption and market-access mediation, 1870--1890 (no baseline covariates)}
\label{tab:estimates_dh_nocov}
\renewcommand{\arraystretch}{1.25}
\setlength{\tabcolsep}{5pt}
\begin{tabular}{lccc}
\toprule
& \multirow{2}{*}{TWFE} & \multicolumn{2}{c}{Doubly Robust} \\
\cmidrule(lr){3-4}
& & Linear path & Additive quadratic \\
& (1) & (2) & (3) \\
\midrule
$D \to Y$ (Total) & 0.520\sym{***} & 0.520\sym{***} & 0.520\sym{***} \\
 & (0.173) & (0.173) & (0.173) \\[3pt]
$D \to Y$ (Direct) & 0.086 & 0.547\sym{***} & 0.599\sym{***} \\
 & (0.190) & (0.183) & (0.206) \\[3pt]
$D \to M$ & 0.756\sym{***} & 0.756\sym{***} & 0.756\sym{***} \\
 & (0.114) & (0.114) & (0.114) \\[3pt]
$M \to Y$ & 0.574\sym{***} & -0.036 & -0.105 \\
 & (0.163) & (0.174) & (0.228) \\[3pt]
$D \to M \to Y$ (Indirect) & 0.434\sym{***} & -0.027 & -0.079 \\
 & (0.118) & (0.130) & (0.168) \\
\midrule
Covariate adjustment & None & None & None \\
Observations & 2{,}110 & 2{,}110 & 2{,}110 \\
\bottomrule
\end{tabular}
\par\footnotesize
\textit{Notes}: Column (1) is the descriptive TWFE decomposition without baseline covariates. Columns (2) and (3) report DR estimates of $ATT$, $ADTT$, the treatment effect on $M$, the mediator-effect ratio, and $AITT$ under our framework.
The outcome is the log value of agricultural land and buildings, and the mediator is the log of cost-based market access. The linear direct-effect specification conditions on $(M_{o0},M_{o1})$, and the additive-quadratic model additionally conditions on $(M_{o0}^2,M_{o1}^2)$ without their interaction. The sample contains $730$ adopters and $325$ controls observed in two periods. Standard errors use a joint state-clustered CR0 stacked-sandwich covariance across $37$ states, and the last two rows use the delta method.
$^{*}$, $^{**}$, and $^{***}$ denote statistical significance at the 10\%, 5\%, and 1\% , respectively.
\end{threeparttable}
\end{table}